\newtheorem{problem}{Problem}
\newtheorem{theorem}{Theorem}
\newtheorem{remark}{Remark}
\newtheorem{lemma}{Lemma}
\newtheorem{proposition}{Proposition}
\newtheorem*{claim*}{Claim}
\newtheorem{definition}{Definition}
\title{Parameterized Restless Temporal Path\footnote{This work was supported by the French ANR projects ANR-22-CE48-0001 (TEMPOGRAL), ANR-24-CE48-4377 (GODASse) and ANR-23-PEIA-005 (REDEEM).}} %TODO Please add
\author[1,2]{Justine Cauvi}
\author[1]{Laurent Viennot}
 \affil[1]{Inria, DI ENS, Paris, France}
\affil[2]{\'Ecole Normale Supérieure de Lyon, Lyon, France}
\begin{document}

\maketitle
\begin{abstract}
Recently, Bumpus and Meeks introduced a purely temporal parameter, called vertex-interval-membership-width, which is promising for the design of fixed-parameter tractable (FPT) algorithms for vertex reachability problems in temporal graphs. We study this newly introduced parameter for the problem of restless temporal paths, in which the waiting time at each node is restricted. In this article, we prove that, in the interval model, where arcs are present for entire time intervals, finding a restless temporal path is NP-hard even if the vertex-interval-membership-width is equal to three. We exhibit FPT algorithms for the point model, where arcs are present at specific points in time, both with uniform delay one and arbitrary positive delays. In the latter case, this comes with a slight additional computational cost. 
\end{abstract}

\textbf{Keywords:} {Temporal graphs, restless temporal paths, parameterized algorithms, interval-membership-width}

\clearpage

\section{Introduction}

A static graph represents relationships (the edges) between entities (the nodes). However, in many real-world applications, those relationships are time-related. For instance, there is a bus at time $\tau$ from stop A to stop B in a public transit network, Alice calls Bob from time $\tau$ to $\tau'$ in a phone call network or Alice follows Bob at time $\tau$ in a social network. This kind of networks are modeled by temporal graphs. In a temporal graph, time information is associated to each edge, such as the appearance time, the delay (time it takes to traverse the edge). Depending on the application, different models can represent a temporal network~\cite{casteigts2010,holme2015}. In the simplest model called point model here, each edge appears at given points in time~\cite{akrida2018,casteigts2019,othon2015}. In the more complex model called interval model here, edges are available for entire time intervals~\cite{buixuan2011,latapy2017}. From these models arise many classical problems that have been well studied~\cite{akrida2018,buixuan2011,casteigts2019,haag2022,kempe2002,mertzios2023,othon2016}. 
Connectivity in this context is more naturally grasped by temporal walks, that is, walks where edges are traversed one after another in time.
Thus, a prominent problem is to find a temporal path, which is a temporal walk in which each node is visited at most once~\cite{buixuan2011,wu2016}.
An interesting variant is to find a $\Delta$-restless temporal path (or walk)~\cite{bentert2020,casteigts2021}. In a $\Delta$-restless temporal path, the waiting time at a node cannot exceed a given value $\Delta$. This notion of path is related to the detection of a chain of infections from an individual to another in the SIR (Susceptible-Infected-Recovered) model~\cite{barabasi2016,ogilvy1927,newman2018}. In this model, an individual becomes immune upon recovery. The restricted waiting time in this case is the time before an individual becomes immune after infection. Another application of this problem is in the context of public transit networks, in which an individual does not want to wait at a stop for too long. 

In the point model, Casteigts, Himmel, Molter and Zschoche~\cite{casteigts2021} showed that deciding whether there is a restless temporal path from a node to another is NP-complete. This motivates the design of temporal parameters that bring tractability to the problem. Bumpus and Meeks~\cite{bumpus2023} introduced a novel parameter called vertex-interval-membership-width which is promising for the design of fixed parameter tractable (FPT for short) algorithms for reachability problems in temporal graphs. Intuitively, it measures the maximum number of active nodes at a given time where a node is considered active during the full interval between the first appearance of one of its incident edges and the last such appearance. Despite the recent interest in this new parameter, no FPT algorithm for the restless temporal path problem has been proposed for it so far. This article presents such an algorithm.  

\subsection{Related Work}
 
Casteigts, Himmel, Molter and Zschoche~\cite{casteigts2021} showed several parameterized algorithms and hardness results on restless temporal path in the point model. Among others, they showed that it is fixed-parameter tractable for three different parameters: the number of hops of the temporal path, the feedback edge number of the underlying graph and the timed feedback vertex number which is a temporal version of the feedback vertex number they introduced in the article. Thejaswi, Lauri and Gionis~\cite{thejaswi2024} also presented an FPT algorithm for restless temporal path with respect to the number of hops, among other results. Zschoche~\cite{zschoche2023} proposed a randomized $\mathcal{O}(4^{k-d}|G|^{\mathcal{O}(1)})$ algorithm to decide whether there is a restless temporal path from $s$ to $t$ in $G$ of length at most $k$, where $d$ is the minimum length of a temporal $st$-path. The parameter we study here is not bounded by any of the parameters used so far for the design of algorithms for restless temporal path (and vice versa).

Note that computing a restless temporal walk between two nodes (rather than a temporal path) can be done in polynomial time in the point model as shown by Himmel, Nichterlein and Niedermeier~\cite{bentert2020}. However, Orda and Rom~\cite{orda89} proved that finding a restless temporal walk is NP-hard in the interval model.

Bumpus and Meeks~\cite{bumpus2023} introduced the interval-membership-width parameter which captures the temporal aspect of the graph (unlike parameters such as the feedback edge number of the underlying graph). They worked with the point model and showed that determining if a temporal graph is temporally Eulerian (i.e. admits a temporal circuit visiting each underlying edge exactly once) is FPT when parameterized by the interval-membership-width. They obtained an FPT algorithm parameterized by the maximum number of times per edge for a problem of Akrida, Mertzios and Spirakis~\cite{akrida2021}. They also studied a problem of vertex reachability called MinReachDelete, which remains NP-hard even with bounded interval-membership-width. They thus introduced a vertex-variant of their parameter, the vertex-interval-membership-width, that brings tractability to the problem. Enright, Meeks and Molter~\cite{enright2023} showed that counting the number of temporal paths between a given pair of nodes is FPT for the vertex-interval-membership-width. Hand, Enright and Meeks~\cite{hand2022} presented an FPT algorithm, also with respect to the vertex-variant, for a problem called Temporal Firefighter. Christodoulou, Crescenzi, Marino, Silva and Thilikos~\cite{christodoulou2024} recently modified the definition of (vertex)-interval-membership-width by considering the evolution of the connected components of the temporal graph and studied several temporal problems parameterized by the new parameters. These variants are bounded by their respective counterparts, and can even be arbitrarily smaller. In particular, they translate the results of~\cite{bumpus2023} for the temporal Eulerian trail and~\cite{hand2022} for the firefighter problem to their refined parameters.

\subsection{Contributions}

One of the objectives in the field of temporal graphs is to find relevant parameters that bring tractability to temporal problems. Our work consists in investigating the new parameters introduced by Bumpus and Meeks~\cite{bumpus2023} for the problem of restless temporal path. We show that the vertex-interval-membership-width parameter is not bounded by any of the parameters previously studied for the design of FPT algorithms for the problem of restless temporal path (those considered in~\cite{casteigts2021,zschoche2023}). We exhibit the NP-hardness of restless temporal path in the point model with uniform delay one even if the graph has interval-membership-width equal to three, which implies that there is no FPT algorithm parameterized by the interval-membership-width if P$\neq$NP. We provide an FPT algorithm parameterized by the vertex-interval-membership-width in this model and adapt it in the case of arbitrary positive delays. The time complexity of these algorithms is $\mathcal{O}(M2^kk)$ and $\mathcal{O}(M2^kk(k+\log M))$, respectively, where $M$ is the number of timed edges and $k$ is the vertex-interval-membership-width. 
They do not require knowledge of the value of $k$. Moreover, the complexity bounds still hold if $k$ is the variant of the parameter proposed in~\cite{christodoulou2024}. Finally, we prove the hardness of restless temporal path in the interval model even if the vertex-interval-membership-width is equal to three. 

\subsection{Structure of the Paper}

Section~\ref{se:preli} introduces the notations and provides the necessary definitions. In Section~\ref{se:point1}, we focus on the point model with uniform delay one (which corresponds to the classical strict setting), while Section~\ref{se:point_delay} deals with the case of arbitrary positive delays. Section~\ref{se:interv} focuses on the interval model.

\section{Preliminaries}
\label{se:preli}

\subsection{Temporal Graphs}

In what follows, we introduce the definition of a temporal graph in the point model as well as useful notions and notations in this model. We call a temporal graph in this model a point temporal graph. The adapted definitions for the interval model are presented in the corresponding section for ease of reading.  

A \emph{point temporal graph} $G=(V,E)$ consists of a set of nodes $V$ and a set of timed arcs $E\subseteq V\times V\times \mathbb{N} \times \mathbb{N}_{>0}$. Informally, a \emph{timed arc} $(u,v,\tau,\delta)\in E$ means that if we are at node $u$ at time exactly $\tau$, we can arrive at $v$ at time $\tau+\delta$ by taking the timed arc $(u,v,\tau,\delta)$. Time $\tau$ is called the \emph{appearance time} (also \emph{departure time}) of the timed arc $(u,v,\tau,\delta)$. Time $\tau+\delta$ is called the \emph{arrival time} of $(u,v,\tau,\delta)$. The value $\delta$ is called the \emph{delay}. Note that the timed arcs are directed. We assume that both $V$ and $E$ are finite. The \emph{lifetime} of a point temporal graph $G=(V,E)$ is defined as $\Lambda=\max \{\tau+\delta \:|\: (u,v,\tau,\delta)\in E \}$. 

Let $\mathcal{T}(E)=\{\tau\in\mathbb{N}\:|\: \exists (u,v,\tau,\delta)\in E\}$ be the set of appearance times of $G$. We define $E_\tau=\{(u,v,\tau,\delta)\in E\}$, the set of timed arcs appearing at time $\tau$. 
The \emph{underlying graph} of a point temporal graph $G=(V,E)$ is the static directed graph $G_\downarrow=(V,E_\downarrow)$ with $E_\downarrow=\{(u,v)\in V^2\:|\:\exists \tau\in \mathbb{N},\delta \in \mathbb{N}_{>0}, (u,v,\tau,\delta)\in E\}$. 

Let $\llbracket i,j\rrbracket$ denote the set of integers $k$ such that $i\leq k\leq j$ for $i,j\in \mathbb{N}$. If $i>j$, the set is empty. A \emph{temporal $st$-walk} of length $\ell$ from $s=v_0$ to $t=v_\ell$ in a point temporal graph $G=(V,E)$ is a sequence of timed arcs $W=(v_{i-1},v_i,\tau_i,\delta_i)_{i\in\llbracket 1,\ell \rrbracket}$ such that for all $i\in \llbracket 1,\ell-1 \rrbracket$, $\tau_i+\delta_i\leq\tau_{i+1}$. This is a \emph{temporal path} if $v_i\neq v_j$ for $i\neq j$, $i,j\in \llbracket 0,\ell \rrbracket $. The \emph{arrival time} of $W$ is $\tau_\ell+\delta_\ell$. For $i\in\llbracket 1,\ell-1\rrbracket$, the \emph{waiting time} at an intermediate node $v_i$ is defined as $\tau_{i+1}-(\tau_i+\delta_i)$. Let $V(W)$ denote the set of vertices of $W$, that is $V(W)=\{v_0,\dots v_\ell\}$. 

In a \emph{$\Delta$-restless temporal $st$-path}, the waiting time at any intermediate node on the path is at most $\Delta$:

\begin{definition}
    A temporal path (or walk) $(v_{i-1},v_i,\tau_i,\delta_i)_{i\in\llbracket 1,\ell \rrbracket}$ in a point temporal graph $G=(V,E)$ is $\Delta$-restless if $\tau_{i+1}- (\tau_{i}+\delta_i)\leq \Delta$ for all $i\in \llbracket 1,\ell-1 \rrbracket$. 
\end{definition}

Figure~\ref{fig:illu_restless} gives an example of a 2-restless temporal path. 

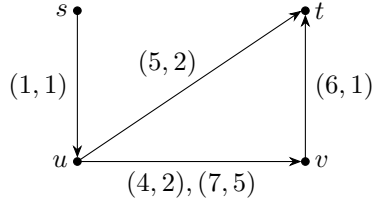
\begin{figure}[t]
	\begin{center}
\begin{tikzpicture}[scale = 1]
			\begin{scope}

                \draw[black, fill = black] (0,0) circle (.05);

                \draw[black, fill = black] (3,0) circle (.05);

                \draw[black, fill = black] (0,-2) circle (.05);

                \draw[black, fill = black] (3,-2) circle (.05);

                \draw[-Stealth] (0,-2) to (2.95,-2);

                \draw[-Stealth] (0,0) to (0,-1.95);

                \draw[-Stealth] (3,-2) to (3,-0.05);

                \draw[-Stealth] (0,-2) to (2.95,0);

                \tkzDefPoint(0,0){0}
				\tkzLabelPoint[left](0){$s$}

                \tkzDefPoint(3,0){0}
				\tkzLabelPoint[right](0){$t$}

                \tkzDefPoint(0,-2){0}
				\tkzLabelPoint[left](0){$u$}

                \tkzDefPoint(3,-2){0}
				\tkzLabelPoint[right](0){$v$}

                \tkzDefPoint(0,-1){0}
				\tkzLabelPoint[left](0){$(1,1)$}

                \tkzDefPoint(3,-1){0}
				\tkzLabelPoint[right](0){$(6,1)$}

                \tkzDefPoint(1.5,-2){0}
				\tkzLabelPoint[below](0){$(4,2),(7,5)$}

                \tkzDefPoint(1.7,-0.7){0}
				\tkzLabelPoint[left](0){$(5,2)$}

	\end{scope}
\end{tikzpicture}
\end{center}
\caption{Illustration of the notion of restless temporal path. A tuple $(\tau,\delta)$ next to an arrow indicates the departure time $\tau$ and the delay $\delta$ of a timed arc. For instance, there is a timed arc $(s,u,1,1)$ and a timed arc $(u,t,5,2)$. The path from $s$ to $t$ consisting of the sequence of timed arcs $(s,u,1,1),(u,v,4,2),(v,t,6,1)$ is a $2$-restless temporal $st$-path. In this point temporal graph, there is no $1$-restless temporal $st$-path.}
\label{fig:illu_restless}
\end{figure}

We assume that a point temporal graph $G=(V,E)$ is given as input as a list of timed arcs sorted by non-decreasing appearance time, which we also denote $E$ for simplicity. We denote by $M=|E|$ the number of timed arcs. 
Finally, we consider the following decision problem:
\begin{problem}
    \textsc{point-restless-temporal-path}: Given a point temporal graph $G=(V,E)$, a source node $s\in V$, a target node $t\in V$ and a waiting time $\Delta\in\mathbb{N}$, is there a $\Delta$-restless temporal path from $s$ to $t$ in $G$? 
\end{problem}

A point temporal graph $G=(V,E)$ is said to have \emph{uniform delay one} if each timed arc has delay one, that is $\forall(u,v,\tau,\delta)\in E$, $\delta=1$. Note that a symmetric point temporal graph with uniform delay one (that is if $(u,v,\tau,1)$ is a timed arc of $G$ then the reverse timed arc $(v,u,\tau,1)$ is also a timed arc of $G$) corresponds to an undirected temporal graph without delay in the classical strict setting where appearance times are required to strictly increase along a temporal walk. In the following, when working with a point temporal graph with uniform delay one, we omit the delay for the sake of simplicity and write a timed arc $(u,v,\tau)$ instead of $(u,v,\tau,1)$. 

\subsection{The Interval-Membership-Width Parameters}

In this section, we adapt to our context the two parameters introduced in~\cite{bumpus2023} for a simpler temporal graph model without delays, namely the interval-membership-width and the vertex-interval-membership-width. 

We say that an arc $(u,v)\in E_\downarrow$ of the underlying graph is active over time $\tau$ if there is a timed arc $(u,v,\tau_0,\delta_0)\in E$ whose appearance time $\tau_0$ is before or at $\tau$ and a timed arc $(u,v,\tau_1,\delta_1)\in E$ whose arrival time $\tau_1+\delta_1$ is after or at $\tau$, that is if $\tau^{min}(u,v)\leq \tau \leq \tau^{max}(u,v)$ with:

$$\tau^{min}(u,v)=\min \{\tau \in \mathbb{N}\:|\: \exists \delta\in \mathbb{N}_{>0},(u,v,\tau,\delta)\in E\},$$
$$\tau^{max}(u,v)=\max \{\tau+\delta \in \mathbb{N}\:|\: (u,v,\tau,\delta)\in E\}.$$

\begin{definition}
    Given a point temporal graph $G=(V,E)$ with lifetime $\Lambda$ and $\tau\in \llbracket 0,\Lambda\rrbracket$, we define $F^a_\tau$ as the set of arcs that are active over time $\tau$, that is:
    $$F^a_\tau=\{(u,v)\in E_{\downarrow}\:|\: \tau^{min}(u,v)\leq \tau\leq\tau^{max}(u,v)\}$$
    The arc-interval-membership-width $imw(G)$ is the maximum number of arcs active over any time, that is $imw(G)=max_{\tau\in \llbracket 0,\Lambda\rrbracket}|F^a_\tau|$. For the sake of brevity, we will call this parameter the arc-IM-width.
\end{definition}

Let us now define the vertex-variant of the arc-IM-width. 
We say that a node $u\in V$ is active over time $\tau$ if there is a timed arc $(v,w,\tau_0,\delta_0)\in E$ with $v=u$ or $w=u$ whose appearance time $\tau_0$ is before or at $\tau$ and a timed arc $(x,y,\tau_1,\delta_1)\in E$ with $x=u$ or $y=u$  whose arrival time $\tau_1+\delta_1$ is after or at $\tau$, that is if $\tau^{min}(u)\leq \tau \leq \tau^{max}(u)$ with:
$$\tau^{min}(u)=\min \{\tau \in \mathbb{N}\:|\:\exists v\in V,\delta\in\mathbb{N}_{>0}, (u,v,\tau,\delta)\in E \text{ or } (v,u,\tau,\delta)\in E\},$$
$$\tau^{max}(u)=\max \{\tau+\delta \in \mathbb{N}\:|\:\exists v\in V, (u,v,\tau,\delta)\in E \text{ or } (v,u,\tau,\delta)\in E\}$$

\begin{definition}
    Given a point temporal graph $G=(V,E)$ with lifetime $\Lambda$ and $\tau\in \llbracket 0,\Lambda\rrbracket$, we define $F^v_\tau$, written $F_\tau$ for simplicity, as the set of nodes that are active over time $\tau$, that is:
    $$F_\tau=\{u\in V\:|\: \tau^{min}(u)\leq \tau\leq\tau^{max}(u)\}$$
    The vertex-interval-membership-width $vimw(G)$ is the maximum number of nodes active over any time, that is $vimw(G)=max_{\tau\in \llbracket 0,\Lambda\rrbracket}|F_\tau|$. For the sake of brevity, we will call this parameter the vertex-IM-width.
\end{definition}

This parameter is not bounded by any of the parameters used in~\cite{casteigts2021,zschoche2023} for the design of algorithms for the restless temporal path problem. See Appendix~\ref{se:comp} for details. 

Note that $\tau^{min}(u)$ and $\tau^{max}(u)$ for all $u\in V$ can be computed in $\mathcal{O}(M)$ time from the list of timed arcs. 

\section{Point Model with Uniform Delay One (Classical Strict Setting)}
\label{se:point1}

We first rule out the possibility of an FPT algorithm for restless temporal path parameterized by arc-IM-width: 

\begin{proposition}\label{prop:hard1}
\textsc{point-restless-temporal-path} is NP-hard, even if the input point temporal graph has arc-IM-width equal to three and uniform delay one.  
\end{proposition}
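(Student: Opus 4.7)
The plan is to design a polynomial-time reduction from 3-SAT to \textsc{point-restless-temporal-path} that produces a point temporal graph with uniform delay one and arc-IM-width exactly three. The core idea is to build a long timeline along which the restless temporal path is forced to traverse a sequence of small gadgets, one per variable and one per clause, linked by a \emph{backbone} of transition arcs. Each gadget will occupy a short dedicated time window, and between consecutive gadgets only one or two backbone arcs remain active, yielding at most three arcs whose activity intervals overlap at any moment.

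For each variable $x_i$, the variable gadget is designed to force the restless path to pick one of two branches, representing $x_i = \textsf{true}$ or $x_i = \textsf{false}$. For each clause $C_j$, the clause gadget is designed to admit the path only if the sequence of previous choices is consistent with some literal of $C_j$ being true. The waiting time $\Delta$ is tuned so that valid traversals correspond exactly to satisfying assignments, and the two directions of correctness follow from checking that every satisfying assignment yields a valid $\Delta$-restless $st$-path and, conversely, that any such path induces a satisfying assignment.

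The main obstacle will be enforcing \emph{consistency} between the variable phase and the clause phase: each variable's choice must be transmitted to the (possibly many) clauses containing it, but no arc can have an activity interval spanning multiple gadgets, as this would immediately violate the arc-IM-width bound. I would overcome this difficulty by interleaving variable and clause gadgets along the timeline, placing each clause gadget immediately after the last variable gadget it depends on, so that the ``state'' encoding the partial assignment is carried by the identity of the vertex currently visited and by its precise arrival time, rather than by long-lived arcs.

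Once the reduction is in place, a careful choice of the backbone together with a spacing of the gadget time windows that keeps their arc activity intervals disjoint (except at the mandatory transitions) guarantees by direct inspection that $\tau^{max}(u,v) < \tau^{min}(u',v')$ for any two arcs coming from non-consecutive gadgets, and thus that no more than three arcs are simultaneously active, proving arc-IM-width equals three.
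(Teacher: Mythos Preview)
Your proposal correctly identifies the central difficulty---transmitting each variable's value to every clause containing it without arcs whose activity intervals span many gadgets---but the mechanism you propose does not close the gap. You write that the partial assignment will be ``carried by the identity of the vertex currently visited and by its precise arrival time.'' In the point model with uniform delay one and a polynomial-size arc set, neither carrier suffices: the current vertex contributes only $O(\log|V|)$ bits, and for a clause gadget to \emph{act} on a particular bit of the arrival time it must offer different outgoing arcs depending on that bit, which in the point model means listing a separate timed arc for each of exponentially many arrival-time values. Placing each clause ``immediately after the last variable it depends on'' does not help either: the clause may also contain $x_1$, whose value was fixed long before and is not recoverable from a single vertex or a polynomially bounded time offset. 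As written, the consistency mechanism is an unfulfilled promise rather than a construction.

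The paper's reduction (adapted from Casteigts, Himmel, Molter and Zschoche) uses an entirely different carrier: the \emph{set of already-visited vertices}, exploiting that a temporal \emph{path} may not repeat a vertex. It reduces from \textsc{exact (3,4)-sat}, so each variable has exactly four occurrences; for each $x_i$ there are two disjoint chains $x_i^1,\dots,x_i^4$ and $\bar x_i^1,\dots,\bar x_i^4$, and the variable phase forces the path through exactly one of them. In the subsequent clause phase, going from $c_j$ to $c_{j+1}$ requires a detour through one of the three literal-occurrence vertices of clause $j$, which is possible only if that vertex was \emph{not} used in the variable phase---i.e., only if that literal is true under the chosen assignment. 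Every underlying arc is given a single appearance time, so its activity interval has length one, and the arc-IM-width is simply the maximum number of arcs sharing a time step: at most two during the variable phase and exactly three during the clause phase. No interleaving or spacing argument is needed; the bound falls out immediately from the construction.
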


This proposition is obtained by adapting the reduction in~\cite{casteigts2021}. For the sake of completeness, we give it in Appendix~\ref{se:app}. 

\medskip
Our main result is an FPT algorithm parameterized by vertex-IM-width:

\begin{theorem}\label{thm:1}
    When the input point temporal graph $G=(V,E)$ has uniform delay one, \textsc{point-restless-temporal-path} can be solved in $\mathcal{O}(Mk2^k)$ deterministic time, where $k$ is the vertex-IM-width of $G$.
\end{theorem}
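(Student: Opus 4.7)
The plan is a dynamic programming algorithm that processes the timed arcs of $G$ in order of increasing appearance time. For each node $v$ we maintain a dictionary mapping each subset $S \subseteq F_\tau$ containing $v$ (where $\tau$ denotes the current time) to the largest arrival time $\tau_0 \le \tau$ such that there exists a $\Delta$-restless temporal $sv$-path $P$ arriving at $\tau_0$ whose vertex set satisfies $V(P) \cap F_\tau = S$. The compression from $V(P)$ to $V(P) \cap F_\tau$ is justified by the key observation that a visited node $w \notin F_\tau$ has $\tau^{\max}(w) < \tau$, so it has no incident arc at any time $\ge \tau$ and can neither be revisited nor constrain future extensions. Similarly, keeping only the latest $\tau_0$ per pair $(v,S)$ is lossless, because the restless condition $\tau' - \tau_0 \le \Delta$ for any candidate future arc at time $\tau'$ becomes easier as $\tau_0$ grows.

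The transition is direct. When the arc $(u,v,\tau)$ is processed (arriving at time $\tau+1$), for each entry $(S,\tau_0)$ stored at $u$ with $v \notin S$ and $\tau - \tau_0 \le \Delta$, we create or update an entry at $v$ with key $S' = (S \cap F_{\tau+1}) \cup \{v\}$ and value $\tau+1$, keeping the larger value in case of collision. The base case is a sentinel entry $(\{s\}, -\infty)$ at $s$, which trivially satisfies the restless bound for its first outgoing arc. The algorithm answers \textsc{yes} iff an entry is ever created at $t$.

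To make the sets fit in a machine word, we maintain a bijection between $F_\tau$ and bit positions in $\llbracket 0, k-1 \rrbracket$, assigning a slot when a node enters $F_\tau$ and freeing it when the node leaves, that is when the current appearance time exceeds $\tau^{\max}(w)$. Each departure of a node $w$ calls for a sanitization step: for every still-active $v$, clear bit $w$ in every key of its dictionary, merging colliding entries by keeping the latest arrival time. Since a dictionary has at most $2^{k-1}$ entries and there are at most $k$ of them at any time, this costs $O(k 2^k)$ per departure; because every departing node was incident to at least one timed arc, the total number of departures is $O(M)$, and cleanup costs $O(Mk 2^k)$ in total. The main DP loop processes each of the $M$ arcs by iterating over at most $2^{k-1}$ entries at $u$ in $O(k)$ time each, giving the desired $O(Mk 2^k)$ bound. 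Note that the slot assignment scheme does not need prior knowledge of $k$, matching the claim in the contributions.

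The main obstacle will be correctness, essentially a double induction on the arcs of a candidate path: every $\Delta$-restless $st$-path must give rise to a sequence of stored entries ending at $t$, and conversely every stored entry must be backtraceable to such a path. Here one has to verify carefully that the compression to $V(P) \cap F_\tau$ and the retention of only the latest arrival time never eliminate a choice needed to complete a valid extension, and that the sanitization step correctly implements the intersection with $F_{\tau+1}$ between successive arcs even when a freed slot is reused for a freshly entering node.
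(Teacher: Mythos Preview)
Your proposal is correct and takes essentially the same approach as the paper: a DP over arcs in time order, storing for each node the set of traces $V(P)\cap F_\tau$ together with the latest arrival time per trace, with the same correctness justification (nodes outside $F_\tau$ can never be revisited; keeping only the latest arrival time per trace is dominance-safe). The differences are cosmetic---you use bit-vectors with a slot-recycling scheme and trigger cleanup on node departure, whereas the paper keeps traces as explicit sets, cleans up per time step via a lexicographic sort, and reinitializes $L[s]$ to $[(\{s\},\tau)]$ at each $\tau$ instead of using a sentinel---but the algorithm, the invariant, and the $\mathcal{O}(Mk2^k)$ accounting are the same.
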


This result is the consequence of the algorithm we will now present. In the following, we want to compute the set of nodes reachable from a source $s$ via a $\Delta$-restless temporal path in a point temporal graph with uniform delay one, parameterized by the vertex-IM-width. Define the trace of a temporal path at time $\tau$ as the set of active nodes over $\tau$ it contains:

\begin{definition}
    The trace of a temporal path $P$ at time $\tau$ is the set $V(P)\cap F_\tau$.
\end{definition}

For ease of informal explanation, we might refer to a trace $S$ without specifying a time $\tau$ if the specific time is not relevant in the explanation. Note that $F_\tau$ can be viewed as a separator in the sense that any temporal $uv$-path where $u$ is active before $\tau$ (that is $\tau^{max}(u)<\tau$) and $v$ is active after $\tau$ (that is $\tau^{min}(v)>\tau$) must go through a node of $F_{\tau}$.

The main idea of the algorithm is to scan timed arcs in order of appearance time $\tau$ while maintaining, for each node $u$, the traces of all $\Delta$-restless $su$-paths. For each such trace, we store the last arrival time of a $\Delta$-restless temporal $su$-path. Note that this information is sufficient to determine if a scanned timed arc $(u,v,\tau)$ can extend a $\Delta$-restless temporal path with such a trace. More precisely, for each node $u$, $L[u]$ stores a list of couples $(S,\sigma)$ where $S$ is the trace of a $\Delta$-restless $su$-path using timed arcs scanned so far, and $\sigma$ is the last arrival time of such a path with trace $S$. When scanning all timed arcs $(u,v,\tau)$ with same appearance time $\tau$, we update a similar list $L'[v]$ of traces that can be obtained with $\Delta$-restless temporal paths ending with such timed arcs $(u,v,\tau)$. Specifically, for a couple $(S,\sigma)$ in $L[u]$, we check if $v\notin S$ to respect the property of temporal path and if $\tau-\sigma\leq \Delta$ to respect the $\Delta$-restless constraint. We then merge lists associated to each node and clean them to remove duplicates and to update traces according to which nodes are active over time $\tau$. See Algorithm~\ref{alg:fpt} for a detailed pseudo-code. 

\begin{algorithm}[ht]
\caption{$\Delta$-restless temporal path}\label{alg:fpt}
\begin{algorithmic}[1]
\Require A point temporal graph $G=(V,E)$ with uniform delay one with $E$ sorted by increasing departure time, a source $s\in V$, a maximum waiting time $\Delta$
\Ensure A table \textit{Reachable} such that $Reachable[v]=True$ if and only if there is a $\Delta$-restless temporal path from $s$ to $v$

\State $Reachable[s]:=True$, $Reachable[v]:=False$ for all $v\in V\setminus\{s\}$
\State Compute the set $\mathcal{T}(E) = \{\tau_1,\dots,\tau_T\}$ of appearance times, with $\tau_1<\dots<\tau_{T}$ \label{lst:init1}
\State Compute $E_\tau=\{(u,v,\tau,\delta)\in E\}$, $V_\tau^-=\{v\in V\:|\:\exists (u,v,\tau)\in E_\tau\}$ for all $\tau\in \mathcal{T}(E)$ \label{lst:init2}
\State Compute $\tau^{max}(u)$ for all $u\in V$ \label{lst:taumax}
\ForAll{$\tau\in \mathcal{T}(E)$ in increasing order}
    \State $L[s]:=[(\{s\},\tau)]$ \Comment{We can start at $s$ at any time} \label{lst:s}
    \ForAll{$(u,v,\tau)\in E_{\tau}$} \label{lst:etau}
        \ForAll{$(S,\sigma)\in L[u]$}
            \If{$v\notin S$ and $\tau-\sigma\leq\Delta$}  \label{lst:cond}
                \State $S':=(S\cap F_{\tau})\cup\{v\}$  \Comment{Note that $u\in S\cap F_{\tau}$} \label{lst:cap}
                \State Add $(S',\tau+1)$ to $L'[v]$ \label{lst:extend}
                \State $Reachable[v]:=True$
            \EndIf  \label{lst:end_cond}
        \EndFor
    \EndFor \label{lst:endfor}
    \ForAll{$v\in V_\tau^-$} \label{lst:vclean}
        \State Append $L'[v]$ to $L[v]$ \label{lst:append}
        \State $L'[v]:=\emptyset$
        \State \textsc{CleanUp}$(v,\tau)$ \label{lst:call_clean}
    \EndFor
\EndFor

\smallskip

\Procedure{CleanUp}{$v,\tau$}
    \Comment{Given a node $v\in V$ and a time $\tau$}
    \ForAll{$(S,\sigma)\in L[v]$}
        \State Remove from $S$ all $u$ such that $\tau^{max}(u)<\tau$ \label{lst:clean1}
    \EndFor
    \State Sort $L[v]$ according to the first coordinate (sets are sorted lexicographically) \label{lst:clean2}
    \State For each $S$ such that there is a $(S,\sigma)\in L[v]$, keep only the couple $(S,\sigma)$ with maximal $\sigma$ \label{lst:clean3}
\EndProcedure

\end{algorithmic}
\end{algorithm}

To prove the correctness of the algorithm, we introduce the following notations. 
Let $\mathcal{T}(E)=\{\tau_1,\dots,\tau_T\}$ with $\tau_1<\dots<\tau_T$. We define $E^i=\{(u,v,\tau)\in E \:|\:\tau\leq \tau_i\}$, the set of timed arcs whose appearance time is less than or equal to $\tau_i$, for $i\in \llbracket 1,T\rrbracket$. Finally, let $G^i=(V,E^i)$. In the following $L_i[u]$ denotes $L[u]$ just after processing time $\tau_i$. Let $L_i^\mathcal{S}[u]$ be the set of traces in $L_i[u]$, that is $L_i^\mathcal{S}[u]=\{S\:|\:\exists (S,\sigma)\in L_i[u]\}$. Given $S\in L_i^\mathcal{S}[u]$, the time associated to $S$ is the unique time $\tau_S^i$ such that $(S,\tau_S^i)\in L_i[u]$. Finally, for $i\in \llbracket 1,T\rrbracket$ and $u\in V$, let $\tau_i^u$ be the last appearance time of a timed arc to $u$ before or at $\tau_i$, that is $\tau_i^u=\max\{\sigma \:|\:\exists (w,u,\sigma)\in E^i\}$. We can now state the following. 

\begin{lemma}\label{lemma:invariant}
    For all $i\in \llbracket 1,T\rrbracket$, $u\in V$,
    \begin{itemize}
        \item $L_i^\mathcal{S}[u]$ is the set of traces at time $\tau_i^u$ of all $\Delta$-restless temporal $su$-paths in $G^i$;
        \item $\forall S\in L_i^\mathcal{S}[u]$, its associated time $\tau_S^i$ is the last arrival time of a $\Delta$-restless temporal $su$-path in $G^i$ with trace $S$ at time $\tau_i^u$.
    \end{itemize}
\end{lemma}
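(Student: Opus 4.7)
The proof is by induction on $i \in \llbracket 1, T \rrbracket$. The base case $i=1$ is a direct verification: after the reset $L[s] := [(\{s\}, \tau_1)]$ and the processing of $E_{\tau_1}$, each $L'[v]$ collects precisely the direct paths $(s, v, \tau_1)$, since chaining two arcs within the same timestep is impossible with uniform delay one. The CleanUp at time $\tau_1$ then yields the desired traces and arrival times. For the inductive step, I would assume the invariants for $i-1$ and split on whether $u \in V_{\tau_i}^-$. If $u \notin V_{\tau_i}^-$, iteration $i$ leaves $L[u]$ untouched and no new $su$-path ends at $u$ in $G^i$ beyond those already in $G^{i-1}$, so the invariant transfers unchanged with $\tau_i^u = \tau_{i-1}^u$.

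The substantive case is $u \in V_{\tau_i}^-$, where $\tau_i^u = \tau_i$. I would prove the invariants in two halves, completeness and soundness. For completeness, take any $\Delta$-restless $su$-path $P$ in $G^i$. If $P$ uses no arc at time $\tau_i$, then $P$ is contained in $G^{i-1}$ and, by induction, the trace $V(P) \cap F_{\tau_{i-1}^u}$ appears in $L_{i-1}^{\mathcal{S}}[u]$; the CleanUp at $\tau_i$ removes from this set exactly the vertices $x$ with $\tau^{max}(x) < \tau_i$, producing $V(P) \cap F_{\tau_i}$ as required. If $P$ ends with an arc $(w, u, \tau_i)$, decompose $P = P' + (w, u, \tau_i)$ where $P'$ is a $\Delta$-restless $sw$-path in $G^{i-1}$; by induction $L_{i-1}[w]$ contains an entry $(S, \sigma)$ with $S = V(P') \cap F_{\tau_{i-1}^w}$ and $\sigma$ at least the arrival time of $P'$, so both guards $u \notin S$ and $\tau_i - \sigma \leq \Delta$ pass, and the algorithm inserts $((S \cap F_{\tau_i}) \cup \{u\}, \tau_i + 1)$ into $L'[u]$.

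The key technical identity driving the analysis is $V(Q) \cap F_{\tau_{i-1}^w} \cap F_{\tau_i} = V(Q) \cap F_{\tau_i}$ for any temporal path $Q$ all of whose arcs have appearance time at most $\tau_{i-1}^w$. The nontrivial inclusion reduces to showing that $x \in V(Q) \cap F_{\tau_i}$ implies $x \in F_{\tau_{i-1}^w}$: being in $V(Q)$ forces $\tau^{min}(x) \leq \tau_{i-1}^w$ (some arc of $Q$ is incident to $x$), while $x \in F_{\tau_i}$ gives $\tau^{max}(x) \geq \tau_i \geq \tau_{i-1}^w$. I expect this identity to be the main obstacle, as it is precisely where one has to reconcile the shifting reference time of a trace (from $\tau_{i-1}^w$ to $\tau_i$) with the intersection $S \cap F_{\tau_i}$ computed by the algorithm.

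For soundness, every entry of $L_i[u]$ can be traced back either to an entry of $L_{i-1}[u]$ (surviving CleanUp) or to an insertion via an arc $(w, u, \tau_i)$ from some $(S, \sigma) \in L[w]$; by the induction hypothesis both correspond to valid $\Delta$-restless $su$-paths with the claimed trace at time $\tau_i$. Maximality of $\sigma$ per trace is then preserved by the final step of CleanUp, which retains only the largest arrival time per trace; combined with the maximality guaranteed for $P'$ by induction, this yields the invariant on last arrival times and closes the induction.
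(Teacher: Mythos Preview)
Your proof is correct and follows essentially the same inductive structure as the paper's: induction on $i$, case split on whether an arc enters $u$ at time $\tau_i$, and extension of prefixes via the induction hypothesis. The one notable difference is that you isolate and prove the identity $V(Q)\cap F_{\tau_{i-1}^w}\cap F_{\tau_i}=V(Q)\cap F_{\tau_i}$, which is exactly what is needed to justify both that $v\notin S$ suffices to rule out $v\in V(P')$ and that the intersection $S\cap F_{\tau}$ computed at Line~\ref{lst:cap} (and the analogous removal in \textsc{CleanUp}) yields the correct trace at the new reference time; the paper asserts these facts (``since $v$ is active over time $\tau_{i+1}$'' and ``\textsc{CleanUp} ensures that all sets in $L[v]$ contain only nodes in $F_{\tau_{i+1}}$'') without spelling out the argument, so your treatment is in fact more complete on this point. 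Both proofs leave the degenerate case $w=s$ (handled by the reset $L[s]:=[(\{s\},\tau)]$) implicit, which is harmless.
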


\begin{proof}
    The proof is by induction on $i$. 
    When processing all timed arcs with appearance time $\tau_1$, we add to $L'[v]$ the couple $(\{s,v\},\tau_1+1)$ for each timed arc $(s,v,\tau_1)$ with $v\neq s$. Thus, this is true for $i=1$.

    Suppose that this is true for $i$ with $1\leq i\leq T-1$ and let us prove that this is true for $i+1$. Let $v\in V$. If there is no timed arc to $v$ with appearance time $\tau_{i+1}$, then $L[v]$ is not modified and we have $L_{i+1}^\mathcal{S}[v]=L_i^\mathcal{S}[v]$ and thus $L_{i+1}^\mathcal{S}[v]$ is the set of traces at time $\tau_{i+1}^v=\tau_i^v$ of all $\Delta$-restless temporal $sv$-paths in $G^{i+1}$. 

    Let us now suppose that there is at least one timed arc to $v$ with appearance time $\tau_{i+1}$. Consider any $\Delta$-restless temporal $sv$-path $P$ ending with such a timed arc $(u,v,\tau_{i+1})$. Its prefix $P_{pre}$ (i.e. the path $P$ without the last timed arc) is in $G^i$ since the graph has positive delays.
    By induction hypothesis, $P_{pre}$ is associated to a trace $S$ at time $\tau_i^u$ in $L_i^\mathcal{S}[u]$ and the associated time $\tau^i_S$ is the last arrival time of a $\Delta$-restless temporal $su$-path $P_{max}$ in $G^i$ with trace $S$ at time $\tau_i^u$. In Line~\ref{lst:cond} of Algorithm~\ref{alg:fpt}, we check that we can extend the path associated to each corresponding couple $(S,\sigma)\in L[u]$ with $(u,v,\tau_{i+1})$: the condition $\tau-\sigma\leq\Delta$ ensures that the maximum waiting time at $u$ is respected and the condition $v\notin S$ ensures that the extended walk is still a path. Indeed, if $v$ was already on the path, it would be in the trace $S$ by induction hypothesis since $v$ is active over time $\tau_{i+1}$. If the two conditions are satisfied, we add $((S\cap F_\tau)\cup\{v\},\tau+1)$ to $L'[v]$. Moreover, keeping for each $S\in L_i^\mathcal{S}[u]$ only the last arrival time of a $\Delta$-restless temporal $su$-path in $G^i$ with trace $S$ at time $\tau_i^u$ suffices. Indeed, consider two $\Delta$-restless temporal $su$-paths $P_1$ and $P_2$ in $G^i$ with trace $S$ at time $\tau_i^u$ with respective arrival time $\sigma_1$ and $\sigma_2$, with $\sigma_2>\sigma_1$. If we can extend the $\Delta$-restless temporal path $P_1$ with a timed arc $(u,w,\tau_{i+1})$, then we can also extend $P_2$ with the same timed arc as the waiting time at $u$ will be smaller in the extension of $P_2$. Thus, after processing all the arcs with appearance time $\tau_{i+1}$ (Line~\ref{lst:endfor}), for each $\Delta$-restless temporal $sv$-path $P$ with arrival time $\tau_{i+1}+1$, there is a couple $(S',\tau_{i+1}+1)$ in $L'[v]$ where $S'$ is the trace of $P$ at time $\tau_{i+1}^v=\tau_{i+1}$. We add $L'[v]$ to $L[v]$ Line~\ref{lst:append}. Finally, the calls to the auxiliary function \textsc{CleanUp} ensure that all sets in $L[v]$ contain only nodes in $F_{\tau_{i+1}}$ (Line~\ref{lst:clean1}) and that we keep for each $S$ only the couple (and only one copy of this couple) with maximal associated $\sigma$ (Line~\ref{lst:clean3}). Thus, the hypothesis is satisfied for $i+1$. 
    
\end{proof}

Theorem~\ref{thm:1} is a consequence of the following. 

\begin{proposition}\label{prop:p2}
    Algorithm~\ref{alg:fpt} computes the set of nodes reachable from a source $s$ via a $\Delta$-restless temporal path in a point temporal graph $G$ with uniform delay one in $\mathcal{O}(Mk2^k)$ deterministic time, where $k$ is the vertex-IM-width of $G$.
\end{proposition}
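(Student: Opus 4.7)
The plan is to derive the proposition from Lemma~\ref{lemma:invariant} for correctness, and to carry out a careful amortized analysis for the $\mathcal{O}(Mk2^k)$ time bound. The correctness part is short: by Lemma~\ref{lemma:invariant} applied at $i=T$, for every $v\in V$ there is at least one couple $(S,\sigma)\in L_T[v]$ iff there is a $\Delta$-restless temporal $sv$-path in $G$. Since the algorithm sets $Reachable[v]:=True$ at Line~\ref{lst:extend} every time such a couple is produced (and never resets it), and $Reachable[s]$ is initialized to $True$ to handle the trivial path at the source, the output is correct.

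For the complexity, the key structural observation I will establish first is that after any call to \textsc{CleanUp} at time $\tau$ on a node $v$, every trace stored in $L[v]$ is a subset of $F_\tau$: Line~\ref{lst:clean1} removes all nodes that are no longer active, and the nodes inserted during this time step via Line~\ref{lst:cap} already belong to $F_\tau$. Combined with the deduplication on Line~\ref{lst:clean3}, this yields $|L[v]|\leq 2^{|F_\tau|}\leq 2^k$ after any cleanup. Consequently, when the main loop processes an arc $(u,v,\tau)$, the list $L[u]$ has not been modified since its last cleanup and has size at most $2^k$. Each inner iteration (Lines~\ref{lst:cond}--\ref{lst:end_cond}) performs $\mathcal{O}(k)$ work (membership test $v\notin S$, intersection $S\cap F_\tau$, insertion into $L'[v]$), so the total cost of the inner loops over all timed arcs is $\mathcal{O}(Mk2^k)$. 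The preprocessing of Lines~\ref{lst:init1}--\ref{lst:taumax} costs $\mathcal{O}(M)$, which is absorbed.

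The main obstacle I expect is bounding the cost of the cleanup calls, because immediately before deduplication the list $L[v]\cup L'[v]$ can exceed $2^k$, so there is no clean per-call bound of the form $\mathcal{O}(k2^k)$. I will handle this by amortization: the total number of cleanup calls is $\sum_\tau |V_\tau^-|\leq M$, and across all cleanups the total size of the processed lists is at most (a) the number of entries ever appended to some $L'[v]$, which is itself bounded by $M\cdot 2^k$ by the inner-loop analysis, plus (b) the size of $L[v]$ at the entry of each cleanup, summed over cleanups, which is at most $M\cdot 2^k$ by the previous structural observation. To keep the per-entry cost at $\mathcal{O}(k)$ inside \textsc{CleanUp}, I need to avoid an extra $\log$ factor from the sort on Line~\ref{lst:clean2}; I will represent each trace as a $k$-bit vector indexed by a fixed ordering of $F_\tau$ and use radix sort over these keys, running in $\mathcal{O}(kn)$ time on $n$ entries. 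Deduplication on Line~\ref{lst:clean3} and the removal of inactive nodes on Line~\ref{lst:clean1} each cost $\mathcal{O}(k)$ per entry. Summing, the total cleanup cost is $\mathcal{O}(Mk2^k)$, which added to the inner-loop cost gives the claimed $\mathcal{O}(Mk2^k)$ bound.
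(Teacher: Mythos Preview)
Your proof is correct and follows essentially the same approach as the paper: correctness from Lemma~\ref{lemma:invariant}, the $|L[v]|\le 2^k$ bound after \textsc{CleanUp}, the $\mathcal{O}(Mk2^k)$ inner-loop cost, and an amortized cleanup analysis summing to $\mathcal{O}(Mk2^k)$. The only cosmetic differences are that the paper phrases the cleanup bound per call as $\mathcal{O}(k2^k M_\tau^v)$ and sums over $(v,\tau)$, and it invokes the linear-time lexicographic sort of Paige~\cite{paige87} rather than your radix sort on $k$-bit trace encodings---both yield the same $\mathcal{O}(kn)$ sorting cost.
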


\begin{proof}
    The correctness follows immediately from Lemma~\ref{lemma:invariant}. 
    
    Regarding the time complexity, let us first observe that, after a call to \textsc{CleanUp}$(v,\tau)$ for $v\in V$ and $\tau\in\mathbb{N}$, we have that the size of $L[v]$ is at most $2^k$. Indeed, the sets $S$ in $L[v]$ after \textsc{CleanUp} are all subsets of $F_{\tau}$ (Line~\ref{lst:clean1}). Thus, there are at most $2^k$ different subsets in $L[v]$. Moreover, we keep only one couple $(S,\sigma)$ for each set $S$ (Line~\ref{lst:clean3}). With this, Lines~\ref{lst:cond}-\ref{lst:end_cond} are executed $\mathcal{O}(M2^k)$ times. Computing $S\cap F_\tau$ for a trace $S$ (Line~\ref{lst:cap}) and checking the condition $v\notin S$ (Line~\ref{lst:cond}) take $\mathcal{O}(k)$ time. Indeed, $S\cap F_\tau$ can be done by scanning the set $S$ and for each $u\in S$, checking if $\tau\leq\tau^{max}(u)$. 

    Computing $E_\tau$, $V_\tau^-$ for each $\tau\in \mathcal{T}(E)$ and $\tau^{max}(u)$ for each $u\in V$ (Lines~\ref{lst:init1},~\ref{lst:init2} and~\ref{lst:taumax}) can be done in $\mathcal{O}(M)$ time. 

    Let us now analyze the cost of cleaning up. Let us denote $M^v_{\tau}$ the number of timed arcs to $v$ with appearance time $\tau$, that is $M^v_{\tau}=|\{u\in V\:|\:(u,v,\tau)\in E\}|$. When processing a node $v$ at Line~\ref{lst:vclean}, the size of $L[v]$ after appending $L'[v]$ to $L[v]$ (Line~\ref{lst:append}) is at most $2^k+2^kM^v_{\tau}$. Then, in \textsc{CleanUp}$(v,\tau)$, the cost of removing nodes that are not active over time $\tau$ (Line~\ref{lst:clean1}) and sorting the list (Line~\ref{lst:clean2}) is $\mathcal{O}(k2^kM_{\tau}^v)$, using the lexicographic sort from~\cite{paige87} for Line~\ref{lst:clean2}. Thanks to this sorting, we can execute Line~\ref{lst:clean3} in $\mathcal{O}(2^kM_{\tau}^v)$ time by scanning the list and keeping one couple with maximum $\sigma$ for each $S$. Thus, the total cost of \textsc{CleanUp}$()$ calls is $\mathcal{O}(\sum_{\tau\in \mathcal{T}(E)}\sum_{v\in V}k2^kM_{\tau}^v)=\mathcal{O}(k2^kM)$. 
\end{proof}

\begin{remark}
    The complexity still holds if $k$ is the refined parameter of~\cite{christodoulou2024}. Indeed, this parameter considers the maximum number of active nodes in a connected component of the underlying graph of any $G^i$. In Algorithm~\ref{alg:fpt}, the traces we consider are indeed included in the connected component of $s$. 
\end{remark}

\begin{remark}\label{prop:path}
    A $\Delta$-restless temporal path from $s$ to a node $v\in V$ can be retrieved in $\mathcal{O}(\ell k(k+\log M))$ time, where $\ell$ is the length of the path and $k$ is the vertex-IM-width of the point temporal graph. This requires additional instructions in Algorithm~\ref{alg:fpt} that are executed in $\mathcal{O}(M2^kk(k+\log M))$ deterministic time. The general idea is to store, for each trace $S$ of a $\Delta$-restless temporal path $P$ arriving at $v$ at time $\tau$, the predecessor $u$ of $v$ on the path, the arrival time $\sigma$ of the prefix of $P$ and the trace of the prefix at time $\sigma$. A detailed explanation is given in Appendix~\ref{se:app_path}.
\end{remark}

  \begin{remark}
      In the function \textsc{CleanUp}$(v,\tau)$ for $v\in V\setminus \{s\}$, we can also remove from $L[v]$ the couples $(S,\sigma)$ such that $\tau-\sigma>\Delta$ as we know that they will not be useful to extend a $\Delta$-restless temporal path.
  \end{remark}

\begin{remark}
    Note that our time complexity can be expressed in terms of the lifetime $\Lambda\geq T$ as $M\leq k^2T$, where $k$ is the vertex-IM-width of $G$. Indeed, there are at most $k^2$ timed arcs with the same appearance time. The time complexity obtained is $\mathcal{O}(k^32^k\Lambda)$. 
\end{remark}

\begin{remark}
    Note that we can adapt our algorithm to the non-strict setting which corresponds to all timed arcs having delay zero. Indeed, it suffices to add $(S',\tau)$ instead of $(S',\tau+1)$ at Line~\ref{lst:extend}, and to repeat $|V_{\tau}^-|$ times Lines\ref{lst:etau}-\ref{lst:call_clean} as a temporal path in $(V,E_\tau)$ has length at most $|V_{\tau}^-|$. As there are at most $k$ active nodes over time $\tau$, we have $|V_{\tau}^-|\leq k$. Therefore, the overall complexity rises by no more than a factor $k$.
\end{remark}

\section{Point Model with Arbitrary Positive Delays}
\label{se:point_delay}

We now generalize the previous result for point temporal graphs with arbitrary (positive) delays:

\begin{theorem}
    \textsc{point-restless-temporal-path} can be solved in $\mathcal{O}(Mk2^k(k+\log M))$ deterministic time, where $k$ is the vertex-IM-width of the graph.
\end{theorem}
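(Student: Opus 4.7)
The plan is to adapt Algorithm~\ref{alg:fpt} to handle arbitrary positive delays. The main modifications I foresee are: (i)~the arrival time of an extension becomes $\tau+\delta$ rather than $\tau+1$; (ii)~the new trace after extending a triple $(S,\sigma)\in L[u]$ with an arc $(u,v,\tau,\delta)$ is $S'=\{w\in S\:|\:\tau^{max}(w)\geq\tau+\delta\}\cup\{v\}$, which captures the trace of the extended path at its new arrival time; (iii)~the extension condition now requires $\sigma\leq\tau$ in addition to $\tau-\sigma\leq\Delta$ and $v\notin S$ (the first was automatic in the uniform delay case but must be enforced here); and (iv)~because a newly created triple has arrival time $\tau+\delta$ that may exceed the next appearance time, extensions cannot simply be batched at the end of each appearance time as in Algorithm~\ref{alg:fpt}. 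Instead, I would maintain, for each node $v$, a priority queue $Q_v$ keyed by arrival time holding triples not yet ready to be used; before iterating over $L[u]$ to process an arc with departure time $\tau$, the algorithm pops from $Q_u$ all triples with $\sigma\leq\tau$, appends them to $L[u]$, and runs a cleanup step analogous to \textsc{CleanUp}.

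The correctness proof would mirror Lemma~\ref{lemma:invariant}, showing by induction that after all events up to time $\tau$ are processed, $L[u]$ contains, for every trace $S$ realized by some $\Delta$-restless $su$-path with arrival time $\sigma\leq\tau$, a single pair $(S,\sigma^*)$ with $\sigma^*$ maximum. The dedup argument still applies: given two triples with the same trace $S$ and arrival times $\sigma_1<\sigma_2\leq\tau$, the later one dominates for any subsequent extension, because the resulting $S'$ depends only on $\tau,\delta,S$, while the restlessness constraint $\tau-\sigma\leq\Delta$ is easier to satisfy for larger $\sigma$. The key semantic step is to verify that $V(P')\cap F_{\tau+\delta}=S'$; this follows from the interval structure of activity, since any vertex of $V(P)$ still active at time $\tau+\delta$ must also be active at the earlier time $\sigma$ and therefore lie in $S$.

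For the complexity analysis, the bound $|L[u]|\leq 2^k$ after cleanup still holds since every trace is a subset of some $F_\sigma$ of size at most $k$ and deduplication keeps at most $2^k$ distinct traces. Per processed arc, iterating over $L[u]$ costs $\mathcal{O}(2^k)$ iterations, each of which spends $\mathcal{O}(k)$ for the $v\notin S$ test and the construction of $S'$, and $\mathcal{O}(\log M+k)$ to push the resulting triple into $Q_v$ (the queue contains at most $\mathcal{O}(M2^k)$ elements). Using the lexicographic sort from~\cite{paige87} as in the proof of Proposition~\ref{prop:p2}, cleanup amortizes to $\mathcal{O}(Mk2^k)$ across all admissions. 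Summing these contributions yields the announced $\mathcal{O}(Mk2^k(k+\log M))$ bound.

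The main obstacle I expect is a careful treatment of the induction invariant when arrivals and departures interleave at arbitrary times: one must ensure that pops from $Q_u$ occur strictly before any query to $L[u]$, and that dedup does not discard a triple whose dominating counterpart is still pending in $Q_u$ and would only dominate after a later admission. I would address this by performing admissions lazily on demand, so that the invariant on $L[u]$ is restored immediately before each use, and by keeping the reachability marking inside the extension step (independent of whether the produced triple is ever actually pulled out of $Q_v$).
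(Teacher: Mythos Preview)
Your approach is correct but differs meaningfully from the paper's. The paper (Algorithm~\ref{alg:fpt_delay}) does \emph{not} introduce a priority queue; instead, for every node $v$ and every trace $S$, it stores \emph{all} arrival times realised with that trace in a balanced search tree $T_S$, and when processing an arc $(u,v,\tau,\delta)$ it queries $T_S$ for the largest $\sigma\le\tau$ to test the restless condition. Trace normalisation may cause two traces to coincide, in which case their trees are merged; the $k(k+\log M)$ factor in the paper's bound comes from (i) comparing size-$k$ traces inside a BST of $2^k$ traces, and (ii) the fact that each stored arrival time can be copied across at most $k$ merges. Your route keeps the single-$\sigma$-per-trace structure of Algorithm~\ref{alg:fpt} and offloads the ``future arrivals'' issue to a min-heap $Q_v$ keyed on arrival time, admitting triples lazily just before $L[u]$ is read. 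The domination argument you give is sound: once both $\sigma_1<\sigma_2\le\tau$ have been admitted, the larger one wins for every later extension, and while $\sigma_2>\tau$ the triple is still in $Q_u$ so dedup cannot prematurely evict $(S,\sigma_1)$---this dissolves exactly the obstacle you flagged. What your approach buys is that it avoids nested BSTs and the tree-merging amortisation; in fact your accounting (iterations $O(M2^k)$, each costing $O(k+\log M)$, plus $O(Mk2^k)$ for cleanup) appears to give $O(M2^k(k+\log M))$, a factor of $k$ better than the stated bound, though you conservatively claim only the theorem's bound. What the paper's approach buys is a simpler control flow with no separate pending pool and no need to argue about admission order, at the cost of the heavier data structure.
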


Again, we want to compute the set of nodes reachable from a source $s$ via a $\Delta$-restless temporal path in a point temporal graph, parameterized by the vertex-IM-width. 
Algorithm~\ref{alg:fpt} cannot be easily adapted to arbitrary delays. The reason is that it implicitly assumes that all timed arcs with appearance time less than $\tau$ have arrival time at most $\tau$. This does not hold anymore for arbitrary delays. Note that scanning timed arcs in order of arrival time instead of appearance time leads to a similar problem.

To deal with this, we adapt the previous algorithm by storing all possible arrival times at a node $v$ for each possible trace $S$ in a self-balancing binary search tree. $L[v]$ now contains couples $(S,T_S)$ with $S$ a trace and $T_S$ the associated tree. When we try to extend a $\Delta$-restless temporal $su$-path with trace $S$ by adding a timed arc $(u,v,\tau,\delta)$,  we first look at the last arrival time $\sigma\leq \tau$ at $u$ with trace $S$ in the corresponding tree. If we can extend the path, then we add $\tau+\delta$ to $T^v_{S'}$ if this time is not already in it, where $T^v_{S'}$ is the tree associated to trace $S'=(S\cap F_\tau)\cup\{v\}$ in $L[v]$ (if there is no such tree, we create one that contains only $\tau+\delta$). After processing a timed arc with appearance time $\tau$, we clean up to update traces according to which nodes are active over time $\tau$. When performing this clean up, we might have several traces that become the same trace $S'$ after removing the nodes that are not active over time $\tau$. We thus need to merge the associated trees to have all the possible arrival times associated to trace $S'$. The detailed pseudo-code (Algorithm~\ref{alg:fpt_delay}) as well as the analysis and useful remarks are available in Appendix~\ref{se:app_arbitrary}. 

\section{Hardness Result for the Interval Model}
\label{se:interv}

It is straightforward to extend the definitions of arc-IM-width and vertex-IM-width to an interval model of temporal graph where arcs are present during intervals of time.
We show in Appendix~\ref{se:app_hard2} that, in this interval model, there is no FPT algorithm parameterized by the corresponding variant of the vertex-IM-width unless $P= NP$. The idea is to reduce \textsc{subset-sum} to \textsc{restless-temporal-path} in an interval temporal graph of constant vertex-IM-width. This is possible because the interval model allows to encode temporal graphs that would have exponentially many edges if converted to the point model. In particular, our reduction uses intervals of exponential length.

%
% ---- Bibliography ----
%
% BibTeX users should specify bibliography style 'splncs04'.
% References will then be sorted and formatted in the correct style.
%
\bibliography{ref}

\clearpage

\appendix

\section{Comparison between the Vertex-IM-Width and the Parameters in~\cite{casteigts2021,zschoche2023}}
\label{se:comp}

Let us first recall the three parameters used in~\cite{casteigts2021}. They designed FPT algorithms for the restless temporal path problem parameterized by the number of hops of the temporal path, the feedback edge number of the underlying graph and the timed feedback vertex number. The feedback edge number of a graph is the minimum number of edges one needs to delete in order to remove all cycles. As the restless temporal path problem is W[1]-hard when parameterized by the feedback vertex number of the underlying graph (one deletes vertices instead of edges), they introduced the timed feedback vertex number. Let us first define this parameter in our context. Removing a timed node $(v,\tau)$ from a point temporal graph means removing all the timed arcs to $v$ with arrival time $\tau$ and all the timed arcs from $v$ with departure time $\tau$. The timed feedback vertex number is the minimum number of timed nodes to be removed so that the underlying graph of the obtained point temporal graph does not contain any cycle. In the sequel, the three feedback parameters refer to the feedback edge number of the underlying graph, the feedback vertex number of the underlying graph and the timed feedback vertex number.

\begin{claim*}
    There exists point temporal graphs for which the three feedback parameters are all equal to zero while the vertex-IM-width is unbounded.
\end{claim*}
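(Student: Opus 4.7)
The plan is to exhibit, for every positive integer $n$, a point temporal graph $G_n$ whose underlying directed graph is acyclic---so that all three feedback parameters vanish---yet whose vertex-IM-width is at least $2n$. Concretely, $G_n$ will have $2n$ vertices $u_1, v_1, \ldots, u_n, v_n$ and, for each $i \in \llbracket 1, n \rrbracket$, two parallel timed arcs $(u_i, v_i, i, 1)$ and $(u_i, v_i, 2n+1-i, 1)$ from $u_i$ to $v_i$. Intuitively, the static picture is just a matching while the temporal picture spreads each arc over a wide time interval.

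The first step is to observe that the underlying directed graph $G_{n,\downarrow}$ is the disjoint union of the $n$ arcs $u_i \to v_i$, since parallel timed arcs collapse to a single arc in $E_\downarrow$ by definition. Such a matching contains no cycle (directed or undirected), so one needs to remove neither an edge nor a (timed) vertex to destroy all cycles. Consequently the feedback edge number, the feedback vertex number of the underlying graph, and the timed feedback vertex number are all equal to zero.

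The second step is to bound the vertex-IM-width from below. By construction, every vertex $u_i$ and $v_i$ is incident only to the two timed arcs above, so $\tau^{\min}(u_i) = \tau^{\min}(v_i) = i$ and $\tau^{\max}(u_i) = \tau^{\max}(v_i) = (2n+1-i)+1 = 2n+2-i$. Taking $\tau = n+1$, the two inequalities $i \leq n+1$ and $2n+2-i \geq n+1$ hold for every $i \in \llbracket 1, n \rrbracket$, so all $2n$ vertices are active at time $n+1$. Hence $|F_{n+1}| = 2n$ and $\mathrm{vimw}(G_n) \geq 2n$, which can be made arbitrarily large by choosing $n$ large.

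There is no substantial obstacle: the construction is entirely explicit and each verification is a short arithmetic check. The only conceptual point to get right is the decoupling of the static structure (a matching, hence acyclic, which forces all three feedback parameters to zero) from the temporal structure (two widely spaced appearance times per arc, which blow up each vertex's active interval and hence the vertex-IM-width).
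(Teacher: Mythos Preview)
Your proposal is correct. The construction you give is sound: the underlying graph is a matching, hence acyclic, so all three feedback parameters vanish; and your arithmetic check that every vertex is active at time $n+1$ is right, yielding $\mathrm{vimw}(G_n)\ge 2n$.

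The paper's argument is, however, noticeably simpler. It takes any directed acyclic graph without isolated nodes and gives every timed arc the \emph{same} appearance time $\tau$; then trivially every node is active at time $\tau$, so the vertex-IM-width equals $|V|$. Your construction achieves the same overlap by spreading two appearance times per arc symmetrically around a midpoint, which is more work than needed: the second timed arc and the careful choice of times $i$ and $2n+1-i$ are unnecessary once one notices that a single common appearance time already forces all active intervals to coincide. Both routes rely on the same two observations (acyclic underlying graph kills the feedback parameters; simultaneous activity blows up the width), but the paper collapses the temporal side to one line while you engineer it explicitly.
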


Simply consider a family of directed acyclic graphs without any isolated node such that every timed arc has the same appearance time. The three feedback parameters of such a point temporal graph are trivially equal to 0 while the vertex-IM-width is equal to the number of nodes.

\begin{claim*}
    There exists a family of point temporal graphs for which the vertex-IM-width is equal to two while the three feedback parameters are unbounded.
\end{claim*}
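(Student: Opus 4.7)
The plan is to exhibit, for every integer $n \geq 1$, a point temporal graph $G_n$ whose vertex-IM-width equals $2$ but whose three feedback parameters each equal $n$. The construction consists of $n$ independent ``gadgets,'' each a tiny directed $2$-cycle, placed in pairwise disjoint time windows so that no two gadgets are ever simultaneously active.

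Concretely, I would take $2n$ vertices $u_1, v_1, \dots, u_n, v_n$ and, for each $i \in \llbracket 1,n\rrbracket$, add the two timed arcs $(u_i,v_i,3i-2,1)$ and $(v_i,u_i,3i-1,1)$. A direct computation gives $\tau^{min}(u_i)=\tau^{min}(v_i)=3i-2$ and $\tau^{max}(u_i)=\tau^{max}(v_i)=3i$, so both $u_i$ and $v_i$ are active exactly on the integer interval $\llbracket 3i-2,3i\rrbracket$. Since these intervals are pairwise disjoint (the gap between $3i$ and $3i+1$ is the key), for every $\tau$ the set $F_\tau$ is either empty or equals $\{u_i,v_i\}$ for a single $i$, hence $vimw(G_n)=2$.

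The underlying graph $G_{n,\downarrow}$ is the vertex-disjoint union of $n$ directed $2$-cycles. Each such $2$-cycle requires at least one edge removal and at least one vertex removal to be broken, so the feedback edge number and the feedback vertex number of $G_{n,\downarrow}$ both equal $n$. For the timed feedback vertex number, removing for instance the timed node $(u_i,3i-2)$ deletes the timed arc $(u_i,v_i,3i-2,1)$, hence the arc $(u_i,v_i)$ from the underlying graph, thereby breaking the $i$-th cycle; conversely, because distinct gadgets share neither vertices nor appearance times, no single timed-node removal can affect two gadgets simultaneously, so at least $n$ removals are needed. All three feedback parameters therefore grow linearly with $n$, which proves the claim by letting $n \to \infty$.

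The plan is essentially routine; the only points requiring some care are the boundary check that consecutive activity intervals $\llbracket 3i-2,3i\rrbracket$ and $\llbracket 3i+1,3(i+1)\rrbracket$ are disjoint on $\mathbb{N}$ (which forces a stride of $3$ rather than the tighter $2$), and the brief lower-bound argument for the timed feedback vertex number showing that a single timed-node deletion cannot break two disjoint gadgets at once.
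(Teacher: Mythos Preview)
Your construction is correct: under the paper's directed underlying-graph convention it gives vertex-IM-width exactly $2$ with all three feedback parameters equal to $n$. The paper takes a different route, building a single connected symmetric ladder on $2k$ nodes with rungs $u_i\leftrightarrow v_i$ at time $2i$ and rails $u_i\leftrightarrow u_{i+1}$, $v_i\leftrightarrow v_{i+1}$ at time $2(i+1)$. Your disjoint $2$-cycles are the more elementary argument and make the width computation immediate. What the ladder buys is robustness of interpretation: its \emph{undirected} underlying graph already contains $k-1$ squares, so its feedback edge and vertex numbers are unbounded whether one reads them in the directed sense used here or in the undirected sense of the original Casteigts--Himmel--Molter--Zschoche setting, whereas your directed $2$-cycles collapse to a matching once symmetrised, so your lower bounds on those two parameters rest on the directed reading. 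On the other hand, in the paper's ladder the activity windows of consecutive column pairs overlap (at times $2(i{+}1)$ and $2(i{+}1)+1$ all of $u_i,v_i,u_{i+1},v_{i+1}$ are active), so your construction is the one that genuinely attains width~$2$.
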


Consider the symmetric point temporal graph with uniform delay one in Figure~\ref{fig:comp}. The constructed point temporal graph $G=(V,E)$ is as follows:
\begin{itemize}
    \item $V=\{u_0,\dots,u_{k-1},v_0,\dots,v_{k-1}\}$;
    \item $\forall i\in\llbracket 0,k-1\rrbracket$, $(u_i,v_i,2i)\in E$ and $(v_i,u_i,2i)\in E$;
    \item $\forall i\in\llbracket 0,k-2\rrbracket$, $(u_i,u_{i+1},2(i+1))\in E$, $(u_{i+1},u_i,2(i+1))\in E$, $(v_i,v_{i+1},2(i+1))\in E$ and $(v_{i+1},v_i,2(i+1))\in E$.
\end{itemize}
It is clear that the three feedback parameters of the point temporal graph are unbounded. 

Finally, the vertex-IM-width is equal to two. Indeed, we have that, for $i\in \llbracket 0,k-2\rrbracket$, $u_i$ and $v_i$ are only active over times $2i$, $2i+1$, $2(i+1)$ and $2(i+1)+1$. The nodes $u_{k-1}$ and $v_{k-1}$ are active over times $2(k-1)$ and $2(k-1)+1$. 

\begin{figure}[t]
	\begin{center}
\begin{tikzpicture}[scale=1.3]
			\begin{scope}

                \draw[black, fill = black] (1,1) circle (.05);
                
                \draw[black, fill = black] (2,1) circle (.05);
                
                \draw[black, fill = black] (3,1) circle (.05);
                
                \draw[black, fill = black] (0,1) circle (.05);

                \draw[black, fill = black] (1,0) circle (.05);

                \draw[black, fill = black] (2,0) circle (.05);
                
                \draw[black, fill = black] (3,0) circle (.05);
                
                \draw[black, fill = black] (0,0) circle (.05);

                \draw[-] (0,0) to (3,0);

                \draw[-] (0,1) to (3,1);

                \draw[-] (0,1) to (0,0);

                \draw[-] (1,1) to (1,0);
                
                \draw[-] (2,1) to (2,0);

                \draw[-] (3,1) to (3,0);
                
                \tkzDefPoint(0,0){0}
				\tkzLabelPoint[below](0){$v_0$}

                \tkzDefPoint(1,0){0}
				\tkzLabelPoint[below](0){$v_1$}

                \tkzDefPoint(2,0){0}
				\tkzLabelPoint[below](0){$v_2$}

                \tkzDefPoint(3,0){0}
				\tkzLabelPoint[below](0){$v_3$}

                \tkzDefPoint(0,1){0}
				\tkzLabelPoint[above](0){$u_0$}

                \tkzDefPoint(1,1){0}
				\tkzLabelPoint[above](0){$u_1$}

                \tkzDefPoint(2,1){0}
				\tkzLabelPoint[above](0){$u_2$}

                \tkzDefPoint(3,1){0}
				\tkzLabelPoint[above](0){$u_3$}

                \tkzDefPoint(0,0.5){0}
				\tkzLabelPoint[left](0){$0$}

                \tkzDefPoint(1,0.5){0}
				\tkzLabelPoint[left](0){$2$}

                \tkzDefPoint(2,0.5){0}
				\tkzLabelPoint[left](0){$4$}

                \tkzDefPoint(3,0.5){0}
				\tkzLabelPoint[left](0){$6$}

                \tkzDefPoint(0.5,0){0}
				\tkzLabelPoint[below](0){$2$}

                \tkzDefPoint(1.5,0){0}
				\tkzLabelPoint[below](0){$4$}

                \tkzDefPoint(2.5,0){0}
				\tkzLabelPoint[below](0){$6$}

                \tkzDefPoint(0.5,1){0}
				\tkzLabelPoint[above](0){$2$}

                \tkzDefPoint(1.5,1){0}
				\tkzLabelPoint[above](0){$4$}

                \tkzDefPoint(2.5,1){0}
				\tkzLabelPoint[above](0){$6$}

                \draw[black, fill = black] (3.6,0.5) circle (.01);
                \draw[black, fill = black] (3.5,0.5) circle (.01);
                \draw[black, fill = black] (3.4,0.5) circle (.01);

                \draw[black, fill = black] (5,0) circle (.05);
                \draw[black, fill = black] (6,0) circle (.05);
                \draw[black, fill = black] (6,1) circle (.05);
                \draw[black, fill = black] (5,1) circle (.05);

                \draw[-] (5,1) to (5,0);
                \draw[-] (6,0) to (5,0);
                \draw[-] (6,1) to (6,0);
                \draw[-] (5,1) to (6,1);

                \tkzDefPoint(5,0){0}
				\tkzLabelPoint[left](0){$v_{k-2}$}

                \tkzDefPoint(5,1){0}
				\tkzLabelPoint[left](0){$u_{k-2}$}

                \tkzDefPoint(6,0){0}
				\tkzLabelPoint[right](0){$v_{k-1}$}

                \tkzDefPoint(6,1){0}
				\tkzLabelPoint[right](0){$u_{k-1}$}

                \tkzDefPoint(5.5,0){0}
				\tkzLabelPoint[below](0){$2(k-1)$}

                \tkzDefPoint(5.5,1){0}
				\tkzLabelPoint[above](0){$2(k-1)$}

                \tkzDefPoint(6,0.5){0}
				\tkzLabelPoint[right](0){$2(k-1)$}

                \tkzDefPoint(5,0.5){0}
				\tkzLabelPoint[left](0){$2(k-2)$}

	\end{scope}
\end{tikzpicture}
\end{center}
\caption{Illustration of the constructed symmetric point temporal graph with uniform delay one. The number next to a line between node $x$ and node $y$ corresponds to the appearance time of the timed arcs from $x$ to $y$ and from $y$ to $x$.}
\label{fig:comp}
\end{figure}
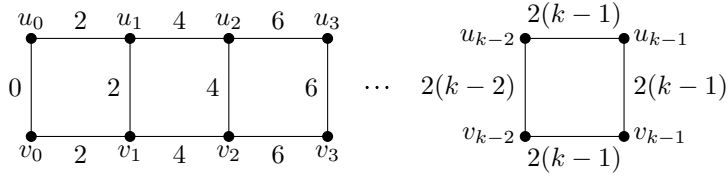

\begin{remark}
    If we consider \textsc{point-restless-temporal-path} instances with $s=u_0$ and $t=u_{k-1}$, the number of hops of a temporal $st$-path is also unbounded. 
\end{remark}

Recall that Zschoche~\cite{zschoche2023} proposed a randomized $\mathcal{O}(4^{\ell-d}|G|^{\mathcal{O}(1)})$ algorithm for \textsc{short-restless-temporal-path} in which one needs to decide whether there is a restless temporal path from $s$ to $t$ in $G$ of length at most $\ell$, where $d$ is the minimum length of a temporal $st$-path.

\begin{claim*}
    There exists a family of \textsc{short-restless-temporal-path} instances for which $\ell-d$ is unbounded while the vertex-IM-width is equal to three.
\end{claim*}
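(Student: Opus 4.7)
The plan is to exhibit, for every $k \geq 2$, an instance $(G_k, s, t, \Delta, \ell)$ of \textsc{short-restless-temporal-path} with vertex-IM-width equal to three in which the minimum temporal $st$-path length is $d = 2$ while $\ell = k$, so that $\ell - d$ grows without bound. The construction pairs a long chain that is the only way to reach $t$ in a restless manner with a short but non-restless ``shortcut'' that drives $d$ down to $2$.

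Concretely, I take $V = \{w_0, w_1, \ldots, w_k, x\}$ with $s = w_0$, $t = w_k$, uniform delay one, and timed arcs $(w_i, w_{i+1}, 2i+1)$ for $i = 0, \ldots, k-1$ (the chain), together with the two shortcut arcs $(w_0, x, 0)$ and $(x, w_k, 2k)$. I set $\Delta = 1$ and $\ell = k$. Since $x$ has a unique incoming arc (from $w_0$) and a unique outgoing arc (to $w_k$), and the chain arcs are all forward-directed, the only temporal $st$-paths in $G_k$ are the full chain $w_0 \to w_1 \to \cdots \to w_k$ (length $k$, waiting time exactly $1$ at every internal $w_i$, hence $1$-restless) and the shortcut $w_0 \to x \to w_k$ (length $2$, waiting time $2k-1$ at $x$, hence not $1$-restless for $k \geq 2$). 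This gives $d = 2$ and minimum $1$-restless length equal to $k$, so $\ell - d = k - 2$, which is unbounded in $k$.

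The remaining step is to verify that the vertex-IM-width of $G_k$ equals three. I would compute $[\tau^{min}(u), \tau^{max}(u)]$ directly for every node: $w_0$ is active on $[0, 2]$, every internal $w_i$ on $[2i-1, 2i+2]$, $w_k$ on $[2k-1, 2k+1]$, and $x$ on the whole range $[0, 2k+1]$. The internal intervals have length $3$ while consecutive ones are shifted by $2$, so at any time $\tau$ in the middle of the lifetime exactly two of them overlap ($w_i$ and $w_{i+1}$) while $w_{i-1}$ has already closed; adding $x$, which is active throughout, gives $|F_\tau| \leq 3$ everywhere, with equality achieved e.g.\ at $\tau = 1$.

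The only subtlety is this interval-overlap bookkeeping: if three consecutive chain intervals ever shared a common point, the width would jump to four once $x$ is added. The spacing of $2$ between consecutive arc times, combined with the fact that each internal interval has length only $3$, is precisely what prevents this. Once the short finite check at the boundary times ($\tau \in \{0, 1, 2, 2k-1, 2k, 2k+1\}$) is dispatched, the length discrepancy between shortcut ($2$) and restless chain ($k$) yields the claim.
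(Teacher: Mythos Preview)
Your construction is correct and follows the same idea as the paper: start from a graph of vertex-IM-width two in which every temporal $st$-path is long, then attach a single ``shortcut'' node with an early arc from $s$ and a late arc to $t$, driving $d$ down to $2$ while raising the width by exactly one. The paper reuses its earlier ladder graph (the $u_i,v_i$ construction) as the base and adds a node $w$ with arcs $(u_0,w,0)$ and $(w,u_{k-1},2(k-1))$; your simpler directed chain serves the same purpose and makes the interval bookkeeping a little cleaner, but the argument is otherwise identical.
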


We simply consider the point temporal graph of Figure~\ref{fig:comp} in which we add a node $w$ and the timed arcs $(u_0,w,0)$ and $(w,u_{k-1},2(k-1))$. If we consider the instance with $s=u_0$, $t=u_{k-1}$ and $\ell=k$, then we have that $\ell-d$ is unbounded while the vertex-IM-width is equal to three.

\section{Proof of Proposition~\ref{prop:hard1}}
\label{se:app}

\begin{proof}
    The reduction presented by Casteigts, Himmel, Molter and Zschoche in~\cite{casteigts2021} is from the NP-complete problem \textsc{exact (3,4)-sat}. In this problem, we are given a formula $\phi$ in conjunctive normal form, with each clause having exactly 3 literals and each variable appearing in exactly 4 clauses and we are asked whether the formula is satisfiable. We present an adaptation of the construction proposed in the article. Let $\phi$ be an instance of \textsc{exact (3,4)-sat} with $n$ variables and $m$ clauses. Consider the following constructed point temporal graph $G=(V,E)$, with uniform delay one:
    \begin{itemize}
        \item $V=\{s_0,s_1,\dots,s_{n+1}\}\cup\{c_1,\dots,c_{m+1}\}\cup \{x_i^k \:|\:i\in \llbracket 1,n\rrbracket, k\in\llbracket 1,4\rrbracket\}\cup \{\bar{x}_i^k \:|\:i\in \llbracket 1,n\rrbracket, k\in\llbracket 1,4\rrbracket\}$;
        \item The source is $s=s_0$ and the target is $t=c_{m+1}$;
        \item For $i\in\llbracket 0,n-1\rrbracket$, we have the timed arcs $(s_i,x_{i+1}^1,10i)$ and $(s_i,\bar{x}_{i+1}^1,10i)$;
        \item For $i\in \llbracket 1, n\rrbracket$, $k\in \llbracket 1,3\rrbracket$, we have the timed arcs $(x_i^k,x_i^{k+1},10(i-1)+2k)$ and $(\bar{x}_i^k,\bar{x}_i^{k+1},10(i-1)+2k)$;
        \item For $i\in \llbracket 1, n\rrbracket$, we have the timed arcs $(x_i^4,s_i,10i-2)$ and $(\bar{x}_i^4,s_i,10i-2)$;
        \item We have the timed arcs $(s_n,s_{n+1},10n)$ and $(s_{n+1},c_1,10n+2)$;
        \item For each clause $c_j$ with $j\in \llbracket 1,m\rrbracket$, if the variable $x_i$ (or $\bar{x}_i$) is present in $c_j$ and it is its $k$-th occurrence in $\phi$, we add the timed arcs $(c_j, x_i^k, 10n+4j)$ and $(x_i^k,c_{j+1}, 10n+4j+2)$ (or $(c_j, \bar{x}_i^k, 10n+4j)$ and $(\bar{x}_i^k,c_{j+1}, 10n+4j+2)$).
    \end{itemize}
    
    An illustration of this construction is presented in Figure~\ref{fig:hard1}.
    
    \begin{figure}[t]
	\begin{center}
\begin{tikzpicture}[scale=0.85]
			\begin{scope}

                \draw[black, fill = black] (0,0) circle (.05);

                \draw[black, fill = black] (1,1) circle (.05);
                
                \draw[black, fill = black] (2,1) circle (.05);
                
                \draw[black, fill = black] (3,1) circle (.05);
                
                \draw[black, fill = black] (4,1) circle (.05);

                \draw[black, fill = black] (1,-1) circle (.05);

                \draw[black, fill = black] (2,-1) circle (.05);
                
                \draw[black, fill = black] (3,-1) circle (.05);
                
                \draw[black, fill = black] (4,-1) circle (.05);

                \draw[-Stealth] (0,0) to (1,-1);

                \draw[-Stealth] (0,0) to (1,1);

                \draw[-Stealth] (1,1) to (2,1);
                
                \draw[-Stealth] (2,1) to (3,1);
                
                \draw[-Stealth] (3,1) to (4,1);
                
                \draw[-Stealth] (1,-1) to (2,-1);
                
                \draw[-Stealth] (2,-1) to (3,-1);
                
                \draw[-Stealth] (3,-1) to (4,-1);

                \tkzDefPoint(0,0){0}
				\tkzLabelPoint[left](0){$s=s_0$}
				
				\tkzDefPoint(1,1){0}
				\tkzLabelPoint[above](0){$x_1^1$}
				
				\tkzDefPoint(2,1){0}
				\tkzLabelPoint[above](0){$x_1^2$}
				
				\tkzDefPoint(3,1){0}
				\tkzLabelPoint[above](0){$x_1^3$}
				
				\tkzDefPoint(4,1){0}
				\tkzLabelPoint[above](0){$x_1^4$}

				\tkzDefPoint(1,-1){0}
				\tkzLabelPoint[above](0){$\bar{x}_1^1$}
				
				\tkzDefPoint(2,-1){0}
				\tkzLabelPoint[above](0){$\bar{x}_1^2$}
				
				\tkzDefPoint(3,-1){0}
				\tkzLabelPoint[above](0){$\bar{x}_1^3$}
				
				\tkzDefPoint(4,-1){0}
				\tkzLabelPoint[above](0){$\bar{x}_1^4$}
				
				\tkzDefPoint(0.5,0.5){0}
				\tkzLabelPoint[below](0){0}
				
				\tkzDefPoint(0.5,-0.5){0}
				\tkzLabelPoint[above](0){0}
				
				\tkzDefPoint(1.5,1){0}
				\tkzLabelPoint[below](0){2}
				
				\tkzDefPoint(2.5,1){0}
				\tkzLabelPoint[below](0){4}
				
				\tkzDefPoint(3.5,1){0}
				\tkzLabelPoint[below](0){6}
				
				\tkzDefPoint(1.5,-1){0}
				\tkzLabelPoint[below](0){2}
				
				\tkzDefPoint(2.5,-1){0}
				\tkzLabelPoint[below](0){4}
				
				\tkzDefPoint(3.5,-1){0}
				\tkzLabelPoint[below](0){6}

				\draw[black, fill = black] (5,0) circle (.05);
				
				\tkzDefPoint(5,0){0}
				\tkzLabelPoint[below](0){$s_1$}

                \draw[-Stealth] (4,1) to (5,0);
                
                \draw[-Stealth] (4,-1) to (5,0);
                
                \tkzDefPoint(4.5,0.5){0}
				\tkzLabelPoint[below](0){8}
				
				\tkzDefPoint(4.5,-0.5){0}
				\tkzLabelPoint[above](0){8}

				\draw[black, fill = black] (6,1) circle (.05);
				
				\draw[black, fill = black] (6,-1) circle (.05);
				
				\draw[-Stealth] (5,0) to (6,1);
                
                \draw[-Stealth] (5,0) to (6,-1);
                
                \tkzDefPoint(6,1){0}
				\tkzLabelPoint[above](0){$x_2^1$}
				
				\tkzDefPoint(6,-1){0}
				\tkzLabelPoint[above](0){$\bar{x}_2^1$}
                
                \tkzDefPoint(5.5,0.5){0}
				\tkzLabelPoint[below](0){10}
				
				\tkzDefPoint(5.5,-0.5){0}
				\tkzLabelPoint[above](0){10}

                \draw[black, fill = black] (8,0) circle (.01);
                \draw[black, fill = black] (7.75,0) circle (.01);
                \draw[black, fill = black] (8.25,0) circle (.01);

                \draw[black, fill = black] (10,1) circle (.05);
                
                \draw[black, fill = black] (10,-1) circle (.05);
                
                \draw[black, fill = black] (11,0) circle (.05);
                
                \draw[-Stealth] (10,1) to (11,0);
                
                \draw[-Stealth] (10,-1) to (11,0);
                
                \tkzDefPoint(10,1){0}
				\tkzLabelPoint[above](0){$x_n^4$}
				
				\tkzDefPoint(10,-1){0}
				\tkzLabelPoint[above](0){$\bar{x}_n^4$}
                
                \tkzDefPoint(10.2,0.5){0}
				\tkzLabelPoint[below](0){$10n-2$}
				
				\tkzDefPoint(10.2,-0.5){0}
				\tkzLabelPoint[above](0){$10n-2$}
				
				\tkzDefPoint(11,-0){0}
				\tkzLabelPoint[right](0){$s_n$}
				
				\draw[black, fill = black] (11,-4) circle (.05);
				
				\tkzDefPoint(11,-4){0}
				\tkzLabelPoint[right](0){$s_{n+1}$}
				
				\tkzDefPoint(11,-2){0}
				\tkzLabelPoint[right](0){$10n$}
				
                \draw[-Stealth] (11,0) to (11,-4);
                
                \draw[black, fill = black] (9.5,-4) circle (.05);
                
                \draw[-Stealth] (11,-4) to (9.5,-4);
                
                \tkzDefPoint(10.33,-4){0}
				\tkzLabelPoint[below](0){$10n+2$}

				\tkzDefPoint(9.5,-4){0}
				\tkzLabelPoint[below](0){$c_1$}

				\draw[-] (9,-3) to (9.5,-4);
				\draw[-] (8.5,-3) to (9.5,-4);
				\draw[-] (8,-3) to (9.5,-4);

				\tkzDefPoint(9.2,-3.3){0}
				\tkzLabelPoint[right](0){$10n+4$}

				\draw[black, fill = black] (6,-4) circle (.05);
				
				\tkzDefPoint(6,-4){0}
				\tkzLabelPoint[below](0){$c_j$}
				
				\tkzDefPoint(7.4,-3.73){0}
				\tkzLabelPoint[below](0){$10n+4j-2$}
				
				\draw[-Stealth] (6.8,-3.5) to (6,-4);
				\draw[-Stealth] (7,-3.5) to (6,-4);
				\draw[-Stealth] (7.2,-3.5) to (6,-4);

				\draw[-Stealth] (6,-4) to (10,1);
				\draw[-Stealth] (6,-4) to (6,-1);
				\draw[-Stealth] (6,-4) to (3,-1);
				model
				\tkzDefPoint(5.9,-2.9){0}
				\tkzLabelPoint[below](0){$10n+4j$}
				
				\draw[black, fill = black] (3,-4) circle (.05);
				
				\tkzDefPoint(3,-4){0}
				\tkzLabelPoint[below](0){$c_{j+1}$}

				\draw[-Stealth] (10,1) to (3,-4);
				\draw[-Stealth] (3,-1) to (3,-4);
				\draw[-Stealth] (6,-1) to (3,-4);
				
				\tkzDefPoint(2.6,-3.4){0}
				\tkzLabelPoint[right](0){$10n+4j+2$}
                
                \draw[-] (2.6,-3.5) to (3,-4);
				\draw[-] (2.4,-3.5) to (3,-4);
				\draw[-] (2.2,-3.5) to (3,-4);

				\draw[black, fill = black] (0,-4) circle (.05);
				
				\draw[-Stealth] (1.5,-3) to (0,-4);
				\draw[-Stealth] (1,-3) to (0,-4);
				\draw[-Stealth] (0.5,-3) to (0,-4);
				
				\tkzDefPoint(0.35,-3.5){0}
				\tkzLabelPoint[left](0){$10n+4m+2$}
				
				\tkzDefPoint(0,-4){0}
				\tkzLabelPoint[below](0){$t=c_{m+1}$}

	\end{scope}
\end{tikzpicture}
\end{center}
\caption{Illustration of the constructed point temporal graph of Proposition~\ref{prop:hard1}. The number next to an arrow corresponds to the appearance time of the timed arc. Here, the clause $c_j$ of the formula $\phi$ is $c_j=\bar{x}_1\vee \bar{x}_2 \vee x_n$ where $x_1$ appears for the third time, $x_2$ for the first time and $x_n$ for the fourth time.}
\label{fig:hard1}
\end{figure}
    
    By adapting the proof in~\cite{casteigts2021}, we have that $\phi$ is satisfiable if and only if there is a $\Delta$-restless temporal path from $s$ to $t$ with $\Delta=1$. 
    
    Let us now prove that the arc-IM-width of the constructed point temporal graph is less than or equal to 3. First, observe that for each $(u,v)\in E_\downarrow$, there is a unique $\tau_{uv}$ such that $(u,v,\tau_{uv})\in E$, with $\tau_{uv}$ even. Thus, each $(u,v)\in E_\downarrow$ is active only over time $\tau_{uv}$ and $\tau_{uv}+1$ by definition. Then, for $\tau\leq 10n+3$, the number of arcs active over time $\tau$ is at most 2 and for $\tau\geq 10n+4$, it is at most 3 as the number of literals in a clause is exactly 3. 

    Note that the vertex-IM-width is at least $4n$. Indeed, each variable appears in exactly 4 clauses, thus for $i\in\llbracket 1,n\rrbracket$ and $k\in\llbracket 1,4\rrbracket$, there is a timed arc to either $x_i^k$ or $\bar{x}_i^k$ with appearance time greater than $10n+2$. As there is a timed arc to $x_i^k$ and a timed arc to $\bar{x}_i^k$ with appearance time smaller than $10n-2$, $F_{10n}$ contains at least $4n$ nodes. 
    
\end{proof}

\section{Retrieving Paths}
\label{se:app_path}

    When we extend a $\Delta$-restless temporal $su$-path with trace $S$ and arrival time $\sigma$ by a timed arc $(u,v,\tau)$ (Line~\ref{lst:extend} of Algorithm~\ref{alg:fpt}), we store information in a table $Arr$ and a self-balancing binary search tree $Parent$ in order to recover the path. More precisely, we add to $Parent$ the association $((v,\tau+1,S'),(u,\sigma,S))$, where $S'$ is the trace of the extended path at time $\tau$, and $Arr[v]$ now contains $(\tau+1,S')$. The total cost of these two new instructions is $\mathcal{O}(M2^kk(k+\log M))$. Indeed, the number of couples added to $Parent$ is at most $M2^k$ so adding such couples takes $\mathcal{O}(k\log(M2^k))=\mathcal{O}(k(k+\log M))$ time overall. The factor $k$ comes from the fact that we need to compare tuples that contain sets of size at most $k$. To do so, we use the lexicographic order and represent the sets as a sorted list of elements (adding a node $v$ to a set $S$ can be performed in $\mathcal{O}(k)$ time). 

    When $Reachable[v]=True$, to retrieve a $\Delta$-restless temporal path from $s$ to $v$ in $G$ we first get $Arr[v]=(\tau',S')$. Recall that, by the correctness of the algorithm, this means that we can reach $v$ from $s$ by a $\Delta$-restless temporal path $P$ in $G$ with arrival time $\tau'$ such that the trace of $P$ at time $\tau'-1$ is $S'$. Then, we find a tuple $(u,\sigma,S)$ such that $((v,\tau',S'),(u,\sigma,S))$ is in $Parent$ in $\mathcal{O}(k(k+\log M))$ time. This gives us the predecessor $u$ of $v$ in $P$, $\sigma$ the arrival time of the prefix of $P$ (without the last arc $(u,v,\tau'-1$)) and $S=V(P)\setminus\{v\}\cap F_{\sigma-1}$ is the trace of the prefix at time $\sigma-1$. By the correctness of the algorithm, there is an element in $Parent$ with the first tuple being $(u,\sigma,S)$. We thus retrieve the predecessor of $u$ in $P$ and so on until we reach $s$.

     We believe that the complexity of retrieving a path can be reduced to $\mathcal{O}(\ell)$ by using perfect hashing~\cite{fredman1982} and coding the traces over $k$ bits (when $k$ and $\log n$ take $\mathcal{O}(1)$ memory words).  

\section{Point Model with Arbitrary Positive Delays: Formal Description}
\label{se:app_arbitrary}

We generalize Proposition~\ref{prop:p2} by providing Algorithm~\ref{alg:fpt_delay} below and obtain the following.

\begin{algorithm}
\caption{$\Delta$-restless temporal path with arbitrary (positive) delays}\label{alg:fpt_delay}
\begin{algorithmic}[1]
\Require A point temporal graph $G=(V,E)$ with $E$ sorted by increasing departure time, a source $s\in V$, a maximum waiting time $\Delta$
\Ensure A table \textit{Reachable} such that $Reachable[v]=True$ if and only if there is a $\Delta$-restless temporal path from $s$ to $v$

\State $Reachable[s]:=True$, $Reachable[v]:=False$ for all $v\in V\setminus\{s\}$
\State Compute the set $\mathcal{T}(E) = \{\tau_1,\dots,\tau_T\}$ of appearance times, with $\tau_1<\dots<\tau_{T}$ \label{lst:init1_delay}
\State Compute $E_\tau=\{(u,v,\tau,\delta)\in E\}$ for each $\tau\in \mathcal{T}(E)$ \label{lst:init2_delay}
\State Compute $\tau^{max}(u)$ for all $u\in V$ \label{lst:taumax_delay}
\State $T_{\{s\}}$ is an empty self-balancing binary search tree
\State $L[s]$ is a self-balancing binary search tree containing the couple $(\{s\},T_{\{s\}})$
\State $L[v]$ is an empty self-balancing binary search tree for all $v\in V\setminus\{s\}$
\ForAll{$\tau\in \mathcal{T}(E)$ in increasing order}
    \State Add $\tau$ to the tree $T_{\{s\}}$ associated to $\{s\}$ in $L[s]$  \Comment{We can start at $s$ at any time}\label{lst:add1}
    \ForAll{$(u,v,\tau,\delta)\in E_{\tau}$}
        \ForAll{$(S,T_S)\in L[u]$}
            \State $\sigma:= Search(T_S,\tau)$ \Comment{$Search(T,\tau)$ returns the maximal $\sigma\leq\tau$ in the tree $T$} \label{lst:search}
            \If{$v\notin S$ and $\tau-\sigma\leq\Delta$}  \label{lst:cond_delay}
                \State $S':=(S\cap F_{\tau})\cup\{v\}$  \Comment{Note that $u\in S\cap F_{\tau}$} \label{lst:cap_delay}
                \If{there is a couple $(S',T_{S'})\in L[v]$} \label{lst:check_tuple1}
                    \State Add $\tau+\delta$ to $T_{S'}$ if not already in it \label{lst:add2}
                \Else 
                    \State $T_{S'}$ is a self-balancing binary search tree containing time $\tau+\delta$ \label{lst:add2.1}
                    \State Add $(S',T_{S'})$ to $L[v]$ \label{lst:addL1}
                \EndIf
                \State $Reachable[v]:=True$
            \EndIf  \label{lst:end_cond_delay}
        \EndFor
        \State \textsc{CleanUpDelay}$(v,\tau)$ \label{lst:call_clean_delay}
    \EndFor \label{lst:endfor_delay}
\EndFor

\Procedure{CleanUpDelay}{$v,\tau$}
    \Comment{Given a node $v\in V$ and a time $\tau$}
    \ForAll{$(S,T_S)\in L[v]$}
    \State $S'=S\cap F_\tau$ \label{lst:clean1_delay}
    \If{$S'\neq S$}
        \State Delete $(S,T_S)$ from $L[v]$ \label{lst:delete}
        \If{there is a couple $(S',T_{S'})\in L[v]$} \label{lst:check_tuple2}
            \ForAll{$\sigma\in T_S$} \label{lst:merge1}
                \State Add $\sigma$ to $T_{S'}$ if not already in it \label{lst:add3}
            \EndFor \label{lst:merge2}
        \Else
            \State Add $(S',T_S)$ to $L[v]$ \label{lst:addL2}
        \EndIf
    \EndIf
    \EndFor
\EndProcedure

\end{algorithmic}
\end{algorithm}

\begin{proposition}
    Algorithm~\ref{alg:fpt_delay} computes the set of nodes reachable from a source $s$ via a $\Delta$-restless temporal path in a point temporal graph $G$ in $\mathcal{O}(M2^kk(k+\log M))$ deterministic time, where $k$ is the vertex-IM-width of $G$. Moreover, a $\Delta$-restless temporal path from $s$ to a node $v$ can be retrieved in $\mathcal{O}(\ell k(k+\log M))$ deterministic time, where $\ell$ is the length of the path.
\end{proposition}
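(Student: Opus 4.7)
The plan is to adapt the proof of Proposition~\ref{prop:p2} to account for two new phenomena introduced by arbitrary positive delays. First, a timed arc $(u,v,\tau,\delta)$ can produce an arrival time $\tau+\delta>\tau$, so keeping only the \emph{maximum} arrival time per trace would discard information needed later when we process $(u,w,\tau',\delta')$ with $\tau<\tau'<\tau+\delta$; this justifies the use of a self-balancing BST $T_S$ storing \emph{all} arrival times compatible with trace $S$, together with the query $\mathrm{Search}(T_S,\tau)$ returning the largest $\sigma\leq\tau$. Second, intersecting $S$ with $F_\tau$ during CleanUpDelay may collapse distinct traces into the same $S'$, forcing the corresponding trees to be merged. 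I would formalise these ideas by stating and inductively proving an invariant analogous to Lemma~\ref{lemma:invariant}: after all timed arcs of appearance time $\leq\tau_i$ have been processed, $L[v]$ contains exactly one tuple $(S,T_S)$ per trace $S$ (at time $\tau_i^v$) of a $\Delta$-restless temporal $sv$-path in $G^i$, and $T_S$ is precisely the set of arrival times of such paths. The induction step mirrors Lemma~\ref{lemma:invariant}, the only new observation being that the extending arc $(u,v,\tau,\delta)$ yields a valid restless continuation from trace $S$ iff $\mathrm{Search}(T_S,\tau)$ lies within $\Delta$ of $\tau$ (any smaller arrival time only worsens the waiting condition). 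CleanUpDelay preserves the invariant because a node with $\tau^{\max}(u)<\tau$ cannot reappear in any future trace, and because tree merging keeps every arrival time associated with its (possibly renamed) trace.

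Turning to complexity, since the sets $S$ stored in $L[v]$ are subsets of $F_\tau$ after every cleanup call, $|L[v]|\leq 2^k$ is maintained throughout. For each timed arc $(u,v,\tau,\delta)$, the inner loop iterates at most $2^k$ times; each iteration costs $O(\log M)$ for the search in $T_S$, $O(k)$ for the membership test and for the intersection with $F_\tau$, $O(k^2)$ for lookup or insertion into $L[v]$ (a self-balancing BST keyed by sorted sets of size at most $k$, whose comparisons cost $O(k)$ in a tree of depth $O(k)$), plus an $O(\log M)$ insertion into $T_{S'}$; this yields $O(M 2^k k(k+\log M))$ for the main loop. CleanUpDelay contributes another $O(2^k k^2)$ per timed arc for relabelling and BST updates. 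The main obstacle is bounding the aggregate cost of merging arrival-time trees, which I expect to handle by the following token argument: a merge at $v$ is triggered by a node leaving the active set (which happens at most once per node), and each trace evolves by the monotone rule $S\mapsto S\cap F_\tau$, so any given arrival time follows a chain of at most $k$ merges (one per node initially in its trace), each costing $O(\log M)$. Since the total number of arrival-time insertions performed by lines~\ref{lst:add2}-\ref{lst:addL1} is $O(M 2^k)$, the aggregate merging cost fits in $O(M 2^k k \log M)$, and the overall runtime matches the claimed $O(M 2^k k(k+\log M))$.

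Finally, for the path-retrieval statement, I would reuse the bookkeeping of Appendix~\ref{se:app_path}. Whenever a new arrival time $\tau+\delta$ is inserted into $T_{S'}$ for a node $v$ (lines~\ref{lst:add2} and~\ref{lst:add2.1}), I add a parent pointer in a BST keyed on the triple $(v,\tau+\delta,S')$ that records the predecessor $(u,\sigma,S)$, where $\sigma=\mathrm{Search}(T_S,\tau)$. Each such insertion costs $O(k+\log M)$, and there are at most $O(M 2^k)$ of them, all absorbed by the main complexity bound. To reconstruct a $\Delta$-restless temporal $sv$-path it then suffices to start from the latest recorded arrival at $v$ and follow parent pointers; each of the $\ell$ hops costs $O(k(k+\log M))$ for one BST lookup with a key of size $O(k)$, yielding the claimed $O(\ell k(k+\log M))$ time bound.
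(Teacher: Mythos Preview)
Your proposal is correct and follows essentially the same approach as the paper: the same invariant (which the paper merely asserts ``follows similar arguments'' and omits), the same $2^k$ bound on $|L[v]|$ after cleanup, and the same amortisation argument that each inserted arrival time is copied across at most $k$ merges, giving the $\mathcal{O}(M2^kk\log M)$ merging cost. The only minor slip is that between an insertion batch and the ensuing \textsc{CleanUpDelay} call $|L[v]|$ can momentarily reach $2\cdot 2^k$ rather than $2^k$, and your stated per-insertion cost for parent pointers should be $\mathcal{O}(k(k+\log M))$ rather than $\mathcal{O}(k+\log M)$; neither affects the final bound.
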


\begin{proof}
    We have the following invariant: after processing time $\tau\in \mathcal{T}(E)$, the nodes $v\in V\setminus\{s\}$ with $Reachable[v]=True$ are the nodes with an incoming timed arc whose departure time is less than or equal to $\tau$ (that is an arc $(u,v,\tau', \delta)$ with $\tau'\leq \tau$) and such that there is a $\Delta$-restless temporal path from $s$ whose last timed arc is such an incoming timed arc. The proof of this statement follows similar arguments to the previous case with uniform delay one and is omitted here for brevity. The correctness of the algorithm follows from this invariant. 

    Regarding the time complexity, the number of times in a tree $T_S$ associated to a node $u$ and a trace $S$ is bounded by $\delta^-(u)=|\{(x,y,\tau,\delta)\in E\:|\:y=u\}|$, the in-degree of $u$, as the number of different arrival times at $u$ is bounded by the number of incoming timed arcs of $u$. Thus, adding a time in a tree (Lines~\ref{lst:add1},~\ref{lst:add2} and~\ref{lst:add3}) and the operation $Search$ (Line~\ref{lst:search}) are in $\mathcal{O}(\log M)$ time. 

    Note that at any point in the algorithm, the number of couples in the self-balancing binary search tree $L[v]$ is at most $2\cdot 2^k$. Indeed, \textsc{CleanUpDelay}$(v,\tau)$ ensures that the number of couples is at most $2^k$ by keeping only nodes that are active over time $\tau$ for each trace. Then, when dealing with a timed arc $(u,v,\tau,\delta)$, we add at most $2^k$ couples in $L[v]$ (Line~\ref{lst:addL1}) and immediately after we call \textsc{CleanUpDelay}$(v,\tau)$. Thus, adding or deleting a couple to $L[v]$ (Lines~\ref{lst:addL1},~\ref{lst:delete} and~\ref{lst:addL2}) and checking if there is a couple $(S',T_{S'})$ in $L[v]$ for a given $S'$ (Lines~\ref{lst:check_tuple1} and~\ref{lst:check_tuple2}) can be done in $\mathcal{O}(k\log(2^k))=\mathcal{O}(k^2)$ time. The factor $k$ comes from the fact that we need to compare sets. To do so, we use the lexicographic order and represent the sets as a sorted list of elements (adding a node $v$ to a set $S$ can be performed in $\mathcal{O}(k)$ time). Thus, the total cost of these operations is in $\mathcal{O}(M2^kk^2)$ time. 

    Let us now analyze the cost of merging trees Lines~\ref{lst:merge1}-\ref{lst:merge2}. Let us bound the number of times we copy a time in a tree (Line~\ref{lst:add3}). To do this, we bound the number of times a given time $\tau+\delta$ added to $T_{S_0}$ Line~\ref{lst:add2} or~\ref{lst:add2.1} is copied in another tree Line~\ref{lst:add3} of the \textsc{CleanUpDelay} procedure. It is copied from the tree $T_S$ to the tree $T_{S'}$ each time we remove nodes from $S$ Line~\ref{lst:clean1_delay}. As $S_0$ contains at most $k$ nodes, the number of times this given time $\tau+\delta$ is copied is at most $k$. Finally, the total number of additions to a tree at Line~\ref{lst:add2} or~\ref{lst:add2.1} is at most $M2^k$. Thus, the total cost of adding at Line~\ref{lst:add3} is in $\mathcal{O}(M2^kk\log M)$ time. Hence, the total cost of the algorithm is in $\mathcal{O} (M2^kk(k+\log M))$ time. 
    
    Finally, the path retrieval works as before.
\end{proof}

\begin{remark}
    When processing time $\tau$ we can remove times that are now useless in the trees. More precisely, we can remove the times $\sigma$ such that $\tau-\sigma>\Delta$. To remove such times in a self-balancing binary search tree such as an AVL tree or a red-black tree, we use the $Split$ operation in $\mathcal{O}(\log M)$ time (see e.g.~\cite{knuth1998}).
\end{remark}

\section{Hardness Result for the Interval Model}
\label{se:app_hard2}

 An interval temporal graph $G^{int}=(V,E^{int})$ consists of a set of nodes $V$ and a set of interval timed arcs $E^{int}\subseteq V\times V\times \mathbb{N}\times\mathbb{N}\times\mathbb{N}_{>0}$ such that, for every $(u,v,\tau,\tau',\delta)\in E^{int}$, we have $\tau'\geq\tau$. Informally, an interval timed arc $(u,v,\tau,\tau',\delta)\in E^{int}$ means that if we are at node $u$ at time $\tau_{dep}\in \llbracket\tau,\tau'\rrbracket$, we can arrive at $v$ at time $\tau_{dep}+\delta$ by taking the interval timed arc. The interval $[\tau,\tau']$ is called the appearance interval of $(u,v,\tau,\tau',\delta)$. The value $\delta$ is called the delay. Again, the interval timed arcs are directed. The lifetime of an interval temporal graph $G^{int}=(V,E^{int})$ is defined as $\Lambda^{int}=\max \{\tau'+\delta \:|\: (u,v,\tau,\tau',\delta)\in E^{int} \}$.

A temporal $st$-walk of length $\ell$ from $s=v_0$ to $t=v_\ell$ in an interval temporal graph is a sequence of interval timed arcs and associated departure times $W=((v_{i-1},v_i,\tau_i,\tau'_i, \delta_i),\tau^{dep}_i)_{i\in\llbracket 1,\ell \rrbracket}$ such that $\tau_i\leq\tau^{dep}_i\leq \tau'_i$ for all $i\in \llbracket 1,\ell\rrbracket$ and $\tau^{dep}_i+\delta_i\leq\tau^{dep}_{i+1}$ for all $i\in \llbracket 1,\ell-1 \rrbracket$. This is a temporal path if $v_i\neq v_j$ for $i\neq j$, $i,j\in \llbracket 0,\ell \rrbracket $. The arrival time of $W$ is $\tau^{dep}_\ell+\delta_\ell$. For $i\in \llbracket 1,\ell-1\rrbracket$, the waiting time at node $v_i$ is defined as $\tau^{dep}_{i+1}-(\tau^{dep}_i+\delta_i)$.

A temporal path (or walk) $((v_{i-1},v_i,\tau_i,\tau'_i, \delta_i),\tau^{dep}_i)_{i\in\llbracket 1,\ell \rrbracket}$ in an interval temporal graph is $\Delta$-restless if $\tau_{i+1}^{dep}-(\tau_{i}^{dep}+\delta_i)\leq \Delta$ for all $i\in \llbracket 1,\ell-1 \rrbracket$.

We assume that an interval temporal graph $G^{int}=(V,E^{int})$ is given as input as a list of interval timed arcs, which we also denote $E^{int}$ for simplicity. 
Finally, we consider the following decision problem:
\begin{problem}
    \textsc{interval-restless-temporal-path}: Given an interval temporal graph $G^{int}=(V,E^{int})$, a source node $s\in V$, a target node $t\in V$ and a waiting time $\Delta\in\mathbb{N}$, is there a $\Delta$-restless temporal path from $s$ to $t$ in $G^{int}$? 
\end{problem}

Note that the interval model captures the point model, since an appearance time in the point model is an appearance interval with equal bounds in the interval model. Moreover, an interval temporal graph $G^{int}=(V,E^{int})$ can be transformed into a point temporal graph $G=(V,E)$ with $E=\{(u,v,\tau+i,\delta)\:|\: \exists (u,v,\tau,\tau',\delta)\in E^{int}\text{ and } i\in \llbracket0,\tau'-\tau\rrbracket\}$. However, this transformation might drastically increase the size of the input.

Let us now define the vertex-IM-width of an interval temporal graph $G^{int}=(V,E^{int})$.
We say that a node $u\in V$ is active over time $\tau$ if there is an interval timed arc $(v,w,\tau_0,\tau_1,\delta)\in E^{int}$ with $v=u$ or $w=u$ and $\tau_0\leq \tau$ and an interval timed arc $(x,y,,\tau_0',\tau_1',\delta')\in E^{int}$ with $x=u$ or $y=u$  and $\tau_1'+\delta'\geq\tau$, that is if $\tau_{int}^{min}(u)\leq \tau \leq \tau_{int}^{max}(u)$ with:
\begin{equation*}
\begin{split}
    \tau^{min}_{int}(u)=\min \{\tau \in \mathbb{N}\:|\:\exists v\in V,\tau'\in\mathbb{N},\delta\in\mathbb{N}_{>0}, & (u,v,\tau,\tau',\delta)\in E^{int} \\
    \text{ or } & (v,u,\tau, \tau',\delta)\in E^{int}\}, 
\end{split}    
\end{equation*}

\begin{equation*}
    \begin{split}
        \tau^{max}_{int}(u)=\max \{\tau'+\delta \in \mathbb{N}\:|\:\exists v\in V,\tau\in\mathbb{N}, & (u,v,\tau,\tau',\delta)\in E^{int} \\
        \text{ or } & (v,u,\tau,\tau',\delta)\in E^{int}\}.
    \end{split}
\end{equation*}

\begin{definition}
    Given an interval temporal graph $G^{int}=(V,E^{int})$ with lifetime $\Lambda^{int}$ and $\tau\in \llbracket 0,\Lambda^{int}\rrbracket$, we define $F^{int}_\tau$ as the set of nodes that are active over time $\tau$, that is:
    $$F^{int}_\tau=\{u\in V\:|\: \tau^{min}_{int}(u)\leq \tau\leq\tau^{max}_{int}(u)\}$$
    The interval vertex-IM-width $vimw^{int}(G^{int})$ is the maximum number of nodes active over any time, that is $vimw^{int}(G^{int})=max_{\tau\in \llbracket 0,\Lambda^{int}\rrbracket}|F^{int}_\tau|$.
\end{definition}

\begin{theorem}\label{thm:hard2}
    \textsc{interval-restless-temporal-path} is NP-hard, even if the input interval temporal graph has interval vertex-IM-width equal to three.
\end{theorem}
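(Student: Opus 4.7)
The plan is to reduce \textsc{subset-sum} (given positive integers $a_1,\dots,a_n$ and a target $B$, decide whether some $S\subseteq\{1,\dots,n\}$ satisfies $\sum_{i\in S}a_i=B$) to \textsc{interval-restless-temporal-path} on graphs of interval vertex-IM-width three. Assume without loss of generality that $n\geq 2$ and $0<B\leq A:=\sum_{i=1}^n a_i$. Let $\delta:=A+\max_i a_i+1$ and $T_i:=i\delta$ for $0\leq i\leq n$. Build $G^{int}=(V,E^{int})$ with $V=\{s=v_0,v_1,\dots,v_n,t\}$ and the following arcs: for every $i\in\{1,\dots,n\}$, two ``choice'' arcs from $v_{i-1}$ to $v_i$ sharing the appearance interval $[T_{i-1},T_{i-1}+A]$ (collapsed to the singleton $\{0\}$ when $i=1$), with respective delays $\delta$ (``exclude'') and $\delta+a_i$ (``include''); and one ``check'' arc $(v_n,t,T_n+B,T_n+B,1)$. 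Set $\Delta:=0$. Since $A$ may be exponential in the bit-size of the input, the intervals are genuinely long---matching the hint.

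For correctness, $\Delta=0$ forces each intermediate vertex to be left at the instant of its arrival, and the point-interval first arc forces the departure from $s$ at time $0$. Since the only arcs of $G^{int}$ go from $v_{i-1}$ to $v_i$ or from $v_n$ to $t$, every temporal $st$-path has the form $s\to v_1\to\dots\to v_n\to t$; if $S\subseteq\{1,\dots,n\}$ denotes the set of steps where the ``include'' arc is chosen, the arrival at $v_i$ occurs at time $T_i+\sum_{j\leq i,\,j\in S}a_j$. The ``check'' arc is usable iff this arrival at $v_n$ equals $T_n+B$, i.e., iff $\sum_{j\in S}a_j=B$. Conversely, given any subset $S$ with $\sum_{j\in S}a_j=B$, each intermediate departure time $T_{i-1}+\sum_{j<i,\,j\in S}a_j$ lies in $[T_{i-1},T_{i-1}+A]$ because the partial sums are bounded by $B\leq A$, and the corresponding $0$-restless temporal $st$-path is valid.

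For the width bound, a direct computation yields the following activity intervals: $[0,\delta+a_1]$ for $s$, $[T_{i-1},T_{i+1}+A+a_{i+1}]$ for $v_i$ when $1\leq i\leq n-1$, $[T_{n-1},T_n+A+a_n]$ for $v_n$, and $[T_n+B,T_n+B+1]$ for $t$. The choice $\delta>A+\max_j a_j$ guarantees $T_{i+1}+A+a_{i+1}<T_{i+2}$, so $v_i$ is inactive from time $T_{i+2}$ onwards; hence at any instant at most three consecutive $v_j$'s are simultaneously active. Moreover, $s$ becomes inactive strictly before $T_2$ (since $\delta+a_1<2\delta$), so $s$ can only coexist with $v_1,v_2$; and $\delta+B>A+a_{n-2}$ implies $T_n+B>T_{n-1}+A+a_{n-2}$, so $v_{n-2}$ is inactive once $t$ activates, and $t$ can only coexist with $v_{n-1},v_n$. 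Altogether at most three vertices are active at any time, and exactly three at $\tau=T_1$ (where $s,v_1,v_2$ are all active), confirming that the interval vertex-IM-width equals three.

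The main obstacle is tuning the gap $\delta$: it must be large enough that the activity of $v_i$ does not reach $T_{i+2}$ (keeping the width at three), yet $A$ must be at least $B$ so that every valid partial subset sum remains inside the departure window of each $v_i$; the setting $\delta=A+\max_i a_i+1$ together with $A=\sum_i a_i$ balances these constraints. A secondary subtlety is the point-interval $\{0\}$ on the first arc: without it, a flexible start time $\tau_0\in[0,A]$ would allow the path to reach $t$ via any subset summing to $B-\tau_0$, destroying the exact correspondence with \textsc{subset-sum}.
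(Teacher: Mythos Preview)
Your reduction is correct and takes essentially the same approach as the paper's proof: both reduce \textsc{subset-sum} to a $0$-restless path along a chain $s=v_0,v_1,\dots,v_n,t$ where each step offers two parallel arcs whose delay difference equals $a_i$, with a point-interval first arc fixing the start time and a point-interval final arc enforcing the exact target; your uniform spacing $T_i=i\delta$ with a single large gap $\delta=A+\max_i a_i+1$ is a slightly cleaner variant of the paper's recursively defined $\sigma_i,\tau_i,\delta_i$. The only slip is an index typo in the width check---$v_{n-2}$'s activity interval ends at $T_{n-1}+A+a_{n-1}$ (per your own formula $[T_{i-1},T_{i+1}+A+a_{i+1}]$), not $T_{n-1}+A+a_{n-2}$---but since $\delta+B>A+a_{n-1}$ holds just as well, the argument is unaffected.
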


The NP-hardness of restless temporal walk in the interval model was proved by Orda and Rom~\cite{orda89}. A simpler proof is presented by Zeitz in~\cite{zeitz2023}. We adapt the reduction of the latter to prove our result. The main difficulty of our reduction lies in the choice of the appearance intervals and the delays so that the interval vertex-IM-width is equal to three. 

\begin{proof}
     The reduction is from the NP-hard problem \textsc{subset-sum}. Recall that in the problem \textsc{subset-sum}, we are given $n$ positive integers $x_1,\dots,x_n$ and a target $X$, and we want to determine if there exists $I\subseteq\llbracket 1,n\rrbracket$ such that $\sum_{i\in I}x_i=X$. Let $((x_i)_{i\in\llbracket 1,n\rrbracket},X)$ be an instance of \textsc{subset-sum}. We construct an instance of \textsc{interval-restless-temporal-path} based on a sequence of times $\sigma_0=\tau_0<\sigma_1<\tau_1<\sigma_2<\dots$ (that we define later):
     \begin{itemize}
         \item The nodes of the constructed interval temporal graph $G^{int}=(V,E^{int})$ are numbered from $0$ to $n+1$. The source is $s=0$ and the target is $t=n+1$;
        \item For $i\in \llbracket 0,n-1\rrbracket$, we have the interval timed arcs $e_{i+1}^1=(i,i+1,\sigma_i,\tau_i,\delta_i)$ and $e_{i+1}^2=(i,i+1,\sigma_i,\tau_i,\delta_{i+1})$;
        \item We add the interval timed arc $e_{n+1}=(n,t,\sigma_n+X,\sigma_n+X,1)$;
        \item We take $\Delta=0$.
     \end{itemize}

     Finally, we have:

     \begin{alignat*}{2}
         & \delta_i=1+\sum_{k\in\llbracket 1,i\rrbracket}x_k && \quad \text{ for } i\in\llbracket 0,n\rrbracket \\
         & \sigma_0 = 0 \\
         &\sigma_i=\sigma_{i-1}+\delta_{i-1} && \quad \text{ for } i\in \llbracket1,n\rrbracket \\
         & \tau_0 = 0 \\
         & \tau_i=\tau_{i-1}+\delta_{i} && \quad \text{ for } i\in \llbracket1,n\rrbracket 
     \end{alignat*}

     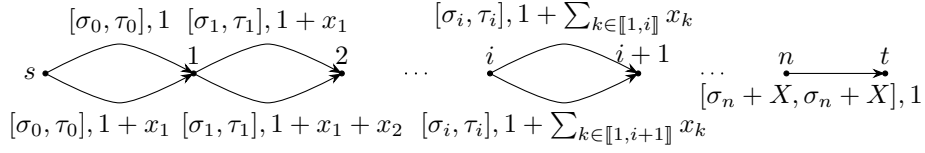
\begin{figure}[t]
	\begin{center}
\begin{tikzpicture}[scale = 0.65]
			\begin{scope}

                \draw[black, fill = black] (0,0) circle (.05);

                \draw[black, fill = black] (3,0) circle (.05);

                \draw [-Stealth] plot [smooth] coordinates {(0,0) (1.5,0.6) (3,0)};

                \draw [-Stealth] plot [smooth] coordinates {(0,0) (1.5,-0.6) (3,0)};

                \draw[black, fill = black] (6,0) circle (.05);

                \draw [-Stealth] plot [smooth] coordinates {(3,0) (4.5,0.6) (6,0)};

                \draw [-Stealth] plot [smooth] coordinates {(3,0) (4.5,-0.6) (6,0)};

                \tkzDefPoint(1.5,0.6){0}
				\tkzLabelPoint[above](0){$[\sigma_0,\tau_0],1$}

                \tkzDefPoint(0.9,-0.6){0}
				\tkzLabelPoint[below](0){$[\sigma_0,\tau_0],1+x_1$}

                \tkzDefPoint(4.5,0.6){0}
				\tkzLabelPoint[above](0){$[\sigma_1,\tau_1],1+x_1$}

                \tkzDefPoint(5,-0.6){0}
				\tkzLabelPoint[below](0){$[\sigma_1,\tau_1],1+x_1+x_2$}

                \tkzDefPoint(0,0){0}
				\tkzLabelPoint[left](0){$s$}

                \tkzDefPoint(3,0){0}
				\tkzLabelPoint[above](0){$1$}

                \tkzDefPoint(6,0){0}
				\tkzLabelPoint[above](0){$2$}

                \draw[black, fill = black] (7.5,0) circle (.01);

                \draw[black, fill = black] (7.3,0) circle (.01);

                \draw[black, fill = black] (7.7,0) circle (.01);

                \draw[black, fill = black] (9,0) circle (.05);

                \draw[black, fill = black] (12,0) circle (.05);

                \draw[black, fill = black] (15,0) circle (.05);

                \draw[black, fill = black] (17,0) circle (.05);

                \draw[black, fill = black] (13.5,0) circle (.01);

                \draw[black, fill = black] (13.3,0) circle (.01);

                \draw[black, fill = black] (13.7,0) circle (.01);

                \draw [-Stealth] plot [smooth] coordinates {(9,0) (10.5,0.6) (12,0)};

                \draw [-Stealth] plot [smooth] coordinates {(9,0) (10.5,-0.6) (12,0)};

                \tkzDefPoint(9,0){0}
				\tkzLabelPoint[above](0){$i$}

                \tkzDefPoint(12.1,0){0}
				\tkzLabelPoint[above](0){$i+1$}

                \tkzDefPoint(10.5,0.6){0}
				\tkzLabelPoint[above](0){$[\sigma_i,\tau_i],1+\sum_{k\in\llbracket 1,i\rrbracket}x_k$}

                \tkzDefPoint(10.5,-0.6){0}
				\tkzLabelPoint[below](0){$[\sigma_i,\tau_i],1+\sum_{k\in\llbracket 1,i+1\rrbracket}x_k$}

                \tkzDefPoint(15,0){0}
				\tkzLabelPoint[above](0){$n$}

                \draw[-Stealth] (15,0) to (17,0);

                \tkzDefPoint(17,0){0}
				\tkzLabelPoint[above](0){$t$}

                \tkzDefPoint(15.5,0){0}
				\tkzLabelPoint[below](0){$[\sigma_n+X,\sigma_n+X],1$}

	\end{scope}
\end{tikzpicture}
\end{center}
\caption{Illustration of the construction of Theorem~\ref{fig:hard2}. The labels next to the arrows correspond to the appearance intervals and the delays of the interval timed arcs.}
\label{fig:hard2}
\end{figure}

     An illustration of the construction is given Figure~\ref{fig:hard2}. We define the delays and the appearance intervals as such for several reasons:    
    \begin{claim*}
        The earliest arrival time of a $\Delta$-restless temporal path from $s$ to $i$ in $G^{int}$ for $1\leq i\leq n$ is $\sigma_i$ and the latest arrival time is $\tau_i$.
    \end{claim*}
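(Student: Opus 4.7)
The plan is to proceed by induction on $i \in \llbracket 1, n\rrbracket$. Note first that the underlying graph restricted to $\llbracket 0, n\rrbracket$ is a simple path $0 \to 1 \to \cdots \to n$, so any temporal $si$-path must use exactly one of the two parallel arcs $e_j^1$ or $e_j^2$ for each $j \in \llbracket 1, i\rrbracket$. Moreover, since $\Delta = 0$, the waiting time at every intermediate node is zero, so the arrival time at $j$ must equal the departure time from $j$ and must therefore lie in the appearance interval $[\sigma_j, \tau_j]$ of the outgoing arcs.

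For the base case $i = 1$, the only available arcs are $e_1^1$ and $e_1^2$, both with appearance interval $[\sigma_0, \tau_0] = [0,0]$, so the departure from $s$ must be at time $0$. The arrival time at node $1$ is then either $\delta_0 = 1 = \sigma_1$ (using $e_1^1$) or $\delta_1 = 1 + x_1 = \tau_1$ (using $e_1^2$), matching the claim.

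For the inductive step, assume the earliest and latest arrival times at node $i$ are $\sigma_i$ and $\tau_i$, respectively. Any arrival time $a$ at node $i$ satisfies $\sigma_i \le a \le \tau_i$, and since $\Delta = 0$, this must also be the departure time toward $i+1$. Using arc $e_{i+1}^1$ yields arrival $a + \delta_i$, and using $e_{i+1}^2$ yields arrival $a + \delta_{i+1}$. The earliest possible arrival at $i+1$ is therefore $\sigma_i + \delta_i$, which equals $\sigma_{i+1}$ by definition; the latest is $\tau_i + \delta_{i+1} = \tau_{i+1}$. Realizability is immediate: at each step we can in fact pick any $a \in [\sigma_i, \tau_i]$ by the induction, and selecting the shortest (resp.\ longest) available delay at every step achieves $\sigma_i$ (resp.\ $\tau_i$).

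The proof is essentially an unwinding of the definitions and I do not anticipate a real obstacle; the only point requiring care is to make explicit that $\Delta = 0$ forces the departure time at each intermediate node to coincide with the arrival time, so one cannot exploit waiting to shift the reachable arrival times beyond $[\sigma_i, \tau_i]$. The telescoping recurrences $\sigma_{i+1} = \sigma_i + \delta_i$ and $\tau_{i+1} = \tau_i + \delta_{i+1}$ then yield the result immediately.
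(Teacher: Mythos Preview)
Your proof is correct and follows essentially the same inductive argument as the paper: both proceed by induction on $i$, using $\Delta=0$ to identify arrival and departure times at each intermediate node and then invoking the recurrences $\sigma_{i+1}=\sigma_i+\delta_i$ and $\tau_{i+1}=\tau_i+\delta_{i+1}$. One phrasing slip: ``we can in fact pick any $a\in[\sigma_i,\tau_i]$'' overstates what the induction gives (only the extremes $\sigma_i$ and $\tau_i$ are asserted achievable, not every intermediate value), but since you immediately follow with the correct realizability argument via always-shortest/always-longest delays, this does not affect the validity.
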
     
     Let us prove this by induction. This is true for $i=1$. Let us assume that this is true for $i$ such that $1\leq i\leq n-1$. The earliest arrival time of a $\Delta$-restless temporal path from $s$ to $i+1$ is then $\sigma_i+\delta_i=\sigma_{i+1}$ as $\sigma_i$ is the earliest arrival time of a $\Delta$-restless temporal path from $s$ to $i$ and $\delta_i$ is the smallest possible delay of an interval timed arc from $i$ to $i+1$ (note that the interval timed arcs from $i$ to $i+1$ have appearance interval $[\sigma_i,\tau_i]$). We conclude with the same kind of reasoning for $\tau_{i+1}$. 
     \begin{claim*}
         We have that $\sigma_i=\tau_{i-1}+1$ for $i\in\llbracket 1,n\rrbracket$, that is the appearance interval of each interval timed arc from $i$ to $i+1$ is strictly after the appearance interval of an interval timed arc from $i-1$ to $i$ for $i\in \llbracket 1,n-1\rrbracket$.
     \end{claim*}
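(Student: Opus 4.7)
The plan is to prove $\sigma_i = \tau_{i-1}+1$ by a short induction on $i$, using only the recurrences $\sigma_i = \sigma_{i-1} + \delta_{i-1}$ and $\tau_i = \tau_{i-1} + \delta_i$ together with the base values $\sigma_0 = \tau_0 = 0$ and $\delta_0 = 1$.

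For the base case $i=1$, I would compute directly: $\sigma_1 = \sigma_0 + \delta_0 = 0+1 = 1 = \tau_0 + 1$. For the inductive step, assuming $\sigma_i = \tau_{i-1}+1$, I would write
\[
\sigma_{i+1} \;=\; \sigma_i + \delta_i \;=\; (\tau_{i-1}+1) + \delta_i \;=\; (\tau_{i-1}+\delta_i) + 1 \;=\; \tau_i + 1,
\]
where the last equality uses the recurrence defining $\tau_i$. This gives the claim for all $i \in \llbracket 1,n \rrbracket$.

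The consequence stated in the claim then follows immediately from the definition of the construction: the interval timed arcs from node $i-1$ to node $i$ are $e_i^1$ and $e_i^2$, both with appearance interval $[\sigma_{i-1},\tau_{i-1}]$, while the arcs from $i$ to $i+1$ have appearance interval $[\sigma_i,\tau_i]$. Since $\sigma_i = \tau_{i-1}+1 > \tau_{i-1}$, the latter interval starts strictly after the former ends, as required.

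There is no real obstacle here; the statement is purely arithmetic and follows from telescoping the two recurrences (indeed, one can also give a one-line non-inductive proof by observing that $\sigma_i = \sum_{k=0}^{i-1}\delta_k$ and $\tau_{i-1} = \sum_{k=1}^{i-1}\delta_k$, so their difference equals $\delta_0 = 1$). The only thing to be mindful of is keeping the indices of the $\delta_k$'s straight, since $\sigma$ accumulates $\delta_0,\delta_1,\ldots,\delta_{i-1}$ while $\tau$ accumulates $\delta_1,\delta_2,\ldots,\delta_i$.
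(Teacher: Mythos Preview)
Your proof is correct and follows essentially the same inductive argument as the paper: verify $i=1$ directly, then use $\sigma_{i+1}=\sigma_i+\delta_i$ together with the induction hypothesis and the recurrence $\tau_i=\tau_{i-1}+\delta_i$ to conclude. The additional telescoping remark and the explicit spelling-out of the interval consequence are fine elaborations but do not change the approach.
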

       Let us prove this by induction. This is true for $i=1$. Let us assume that this is true for $i$ such that $1\leq i\leq n-1$. We have that $\sigma_{i+1}=\sigma_i+\delta_i$=$\tau_{i-1}+1+\delta_i$ by induction hypothesis and thus $\sigma_{i+1}=\tau_i+1$. We will see later that this is useful in order to have the interval vertex-IM-width equal to three. 
    \begin{claim*}
        There is a $\Delta$-restless temporal path from $s$ to $t$ in $G^{int}$ if and only if there exists $I\subseteq\llbracket 1,n\rrbracket$ such that $\sum_{i\in I}x_i=X$. 
    \end{claim*}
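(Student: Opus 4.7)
The plan is as follows. Since $\Delta = 0$, any $\Delta$-restless temporal path must have zero waiting time at each intermediate node, so the departure time from every node equals its arrival time. The underlying graph being essentially a linear chain $s = 0 \to 1 \to \dots \to n \to t$ (with no backward arcs), any temporal $st$-walk visits $0,1,\dots,n,t$ in order, is automatically a path, and the only freedom lies, for each $i \in \llbracket 1, n \rrbracket$, in the choice between the two parallel arcs $e_i^1$ (delay $\delta_{i-1}$) and $e_i^2$ (delay $\delta_i = \delta_{i-1} + x_i$) from $i-1$ to $i$. I would encode such a choice as $\epsilon_i \in \{0,1\}$, with $\epsilon_i = 0$ for $e_i^1$ and $\epsilon_i = 1$ for $e_i^2$, so that choosing $e_i^{1+\epsilon_i}$ contributes exactly $\delta_{i-1} + \epsilon_i x_i$ to the cumulative travel time.

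First I would observe that, since the two arcs from $s$ to $1$ both have appearance interval $[\sigma_0, \tau_0] = [0,0]$, the departure from $s$ is forced to time $0$. Then, using the recursion $\sigma_i = \sigma_{i-1} + \delta_{i-1}$, I would prove by induction on $i \in \llbracket 1, n \rrbracket$ that the arrival (equivalently, departure) time at $i$ under a choice $\epsilon \in \{0,1\}^n$ is
\[
a_i \;=\; \sigma_i + \sum_{j=1}^i \epsilon_j x_j.
\]
The base case $i=1$ gives $a_1 = \delta_0 + \epsilon_1 x_1 = \sigma_1 + \epsilon_1 x_1$, and the step uses $a_{i+1} = a_i + \delta_i + \epsilon_{i+1} x_{i+1}$. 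Using $\tau_i - \sigma_i = \sum_{j=1}^i x_j$ (also an easy induction from $\tau_i = \tau_{i-1} + \delta_i$), the value $a_i$ lies in $[\sigma_i, \tau_i]$, hence is a valid departure time for either of the two arcs from $i$ to $i+1$. Consequently, every bit-string $\epsilon$ yields a valid $\Delta$-restless temporal walk from $s$ to $n$, and conversely every such walk corresponds to a unique $\epsilon$.

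To conclude the iff, I would then analyse the last hop via $e_{n+1}$, whose appearance interval $[\sigma_n + X, \sigma_n + X]$ forces the departure from $n$ to be at time $\sigma_n + X$. With zero waiting time at $n$, this is possible precisely when $a_n = \sigma_n + X$, i.e.\ $\sum_{j=1}^n \epsilon_j x_j = X$. Setting $I = \{j \in \llbracket 1,n\rrbracket : \epsilon_j = 1\}$, this is exactly the \textsc{subset-sum} condition $\sum_{i \in I} x_i = X$, which gives both implications simultaneously.

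I do not expect any real obstacle for this final claim: the argument is a direct algebraic verification once one notices that selecting $e_i^2$ over $e_i^1$ adds exactly $x_i$ to the cumulative arrival time, so the restless constraint translates the $\epsilon$-choice into a subset-sum equation. The truly delicate aspect of the reduction, namely choosing the appearance intervals so that the interval vertex-IM-width remains bounded by three (which relies on the spacing identity $\sigma_i = \tau_{i-1} + 1$), has already been established in the preceding two claims.
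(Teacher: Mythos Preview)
Your proposal is correct and follows essentially the same approach as the paper: both arguments hinge on the observation that choosing $e_i^2$ over $e_i^1$ adds exactly $x_i$ to the cumulative arrival time, that every choice vector yields a valid $0$-restless walk up to node $n$ because $a_i\in[\sigma_i,\tau_i]$, and that the singleton interval on $e_{n+1}$ forces $a_n=\sigma_n+X$. The only cosmetic difference is that the paper splits the argument into separate necessity and sufficiency paragraphs, whereas you treat both directions at once via the bijection between bit-strings $\epsilon$ and $0$-restless $sn$-walks; the underlying inductions and computations are the same.
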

    Necessity: Suppose that there exists $I\subseteq\llbracket 1,n\rrbracket$ such that $\sum_{i\in I}x_i=X$. Consider the following temporal $st$-path in $G^{int}$: $P=(e_i,\sigma_{i-1}+\sum_{k\in I\cap \llbracket 1,i-1\rrbracket}x_k)_{i\in \llbracket 1, n+1\rrbracket}$ with, for $i\in \llbracket 1, n\rrbracket$, $e_i=e_i^2$ if $i\in I$ and $e_i=e_i^1$ otherwise. Informally, we traverse arc $e^2_1$ at time 0 if $1\in I$, otherwise we traverse arc $e_1^1$. Then, as soon as we arrive at node $i-1$ for $2\leq i\leq n$, we directly take arc $e_i^2$ if $i\in I$ and $e_i^1$ otherwise. Then we take arc $e_{n+1}$. Let us prove that $P$ is a valid $\Delta$-restless temporal $st$-path in $G^{int}$ with $\Delta=0$. To do so, we only need to prove that $P_{sub_j}=(e_i,\sigma_{i-1}+\sum_{k\in I\cap \llbracket 1,i-1\rrbracket}x_k)_{i\in \llbracket 1, j\rrbracket}$, the subpath from $s$ to $j$ for $j\in \llbracket 1,n\rrbracket$, is a valid $\Delta$-restless temporal path with arrival time $\sigma_j+\sum_{k\in I\cap \llbracket 1,j\rrbracket}x_k$. Again, we prove this by induction. This is true for $j=1$. Let us assume that this is true for $j$ such that $1\leq j\leq n-1$. We have, by induction hypothesis, that $P_{sub_j}$ is a valid $\Delta$-restless temporal $sj$-path in $G^{int}$ with arrival time $\sigma_j+\sum_{k\in I\cap \llbracket 1,j\rrbracket}x_k$. The subpath $P_{sub_{j+1}}$ extends the subpath $P_{sub_j}$ by taking arc $e_{j+1}$ at departure time $\tau^{dep}_j=\sigma_j+\sum_{k\in I\cap \llbracket 1,j\rrbracket}x_k$. Thanks to the previous observation about earliest and latest arrival time, we have that $\tau^{dep}_j\in \llbracket \sigma_j,\tau_j \rrbracket$, thus we can indeed take arc $e_{j+1}$ at this time. Finally, the arrival time of $P_{sub_{j+1}}$ is $\sigma_j+\sum_{k\in I\cap \llbracket 1,j\rrbracket}x_k+\delta_j+x$ with $x=0$ if $j+1\notin I$ and $x=x_{j+1}$ otherwise, which is equal to $\sigma_{j+1}+\sum_{k\in I\cap \llbracket 1,j+1\rrbracket}x_k$.

    Sufficiency: Suppose that there is a $\Delta$-restless temporal $st$-path $P$ in $G^{int}$ with $\Delta=0$. Let $I=\{i\in\llbracket 1,n\rrbracket\:|\: e^2_{i} \text{ is a timed arc of the path $P$}\}$. The arrival time of $P$ is $\sigma_n+X+1$, thus we have that $\sum_{i\in I}\delta_{i}+\sum_{i\in\llbracket 1,n\rrbracket \setminus I}\delta_{i-1}+1=\sigma_n+X+1$. We have that:

    \begin{equation*}
        \begin{split}
            \sum_{i\in I}\delta_{i}+\sum_{i\in\llbracket 1,n\rrbracket \setminus I}\delta_{i-1} & = \sum_{i\in I}(\delta_{i-1}+x_i)+\sum_{i\in\llbracket 1,n\rrbracket \setminus I}\delta_{i-1} \\
            &= \sum_{i\in \llbracket 1,n\rrbracket} \delta_{i-1}+\sum_{i\in I} x_i \\
            &=\sigma_n + \sum_{i\in I} x_i
        \end{split}
    \end{equation*}

    Thus, $\sum_{i\in I} x_i=X$ and this concludes the proof. 
    \begin{claim*}
        The constructed interval temporal graph $G^{int}$ has interval vertex-IM-width equal to three.
    \end{claim*}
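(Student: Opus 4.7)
The plan is to compute the active interval $[\tau^{min}_{int}(u), \tau^{max}_{int}(u)]$ of every node $u$ and show that, at every time, at most three of these intervals overlap, while exhibiting a time at which exactly three do.

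First, I would compute the endpoints. For an interior node $i \in \llbracket 1, n-1\rrbracket$, the incoming arcs $e_i^1, e_i^2$ have appearance interval $[\sigma_{i-1}, \tau_{i-1}]$ and their latest arrival (via $e_i^2$) is $\tau_{i-1} + \delta_i = \tau_i$; the outgoing arcs $e_{i+1}^1, e_{i+1}^2$ have appearance interval $[\sigma_i, \tau_i]$ and their latest arrival (via $e_{i+1}^2$) is $\tau_i + \delta_{i+1} = \tau_{i+1}$. Hence $\tau^{min}_{int}(i) = \sigma_{i-1}$ and $\tau^{max}_{int}(i) = \tau_{i+1}$. For $s = 0$ (only outgoing) one gets $[0, \tau_1]$; for node $n$, the extra outgoing arc $e_{n+1}$ gives $\tau^{max}_{int}(n) = \max(\tau_n, \sigma_n + X + 1)$, which under the (necessary) assumption $X \leq \sum_k x_k$ lies in $\{\tau_n, \tau_n + 1\}$; and for $t$, the interval is simply $[\sigma_n + X, \sigma_n + X + 1]$.

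Next, I would invoke the previously established identity $\sigma_i = \tau_{i-1} + 1$ to rewrite the active interval of each $i \in \llbracket 0, n-1\rrbracket$ (with the convention $\tau_{-1} = -1$) as $[\tau_{i-2} + 1, \tau_{i+1}]$. Given a time $\tau$, let $j$ be the unique index with $\tau_{j-1} < \tau \leq \tau_j$; this is well defined because consecutive $\tau_i$'s are strictly increasing and $\sigma_{j+1} = \tau_j + 1$ leaves no gap. The condition $\tau_{i-2} + 1 \leq \tau \leq \tau_{i+1}$ then collapses to $j - 1 \leq i \leq j + 1$, so at most three nodes among $\llbracket 0, n-1 \rrbracket$ are active at $\tau$.

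The step requiring the most care is the range containing $t$'s spike interval, which does not obey the generic staircase. To handle it, I would note that $[\sigma_n + X, \sigma_n + X + 1] \subseteq (\tau_{n-1}, \tau_n + 1]$; for any $\tau$ in this range, the staircase restriction forces the interior active nodes to lie in $\{n-1, n\}$ (node $s$ is excluded because $\tau_1 \leq \tau_{n-1} < \tau$ for $n \geq 2$, with the degenerate case $n = 1$ checked directly), and the possible one-unit extension of $\tau^{max}_{int}(n)$ to $\tau_n + 1$ in the case $X = \sum_k x_k$ is absorbed because node $n-1$ is no longer active at $\tau_n + 1$. Combined with the first paragraph this gives $vimw^{int}(G^{int}) \leq 3$. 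Tightness is immediate: at time $\tau = \sigma_n + X$ the three nodes $n-1$, $n$, and $t$ are simultaneously active, so $vimw^{int}(G^{int}) = 3$.
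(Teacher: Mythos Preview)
Your approach is essentially the paper's: compute $[\tau^{min}_{int}(i),\tau^{max}_{int}(i)]$ for each vertex, exploit the identity $\sigma_i=\tau_{i-1}+1$ to see that the active intervals form a staircase with at most three overlapping at any time, and treat the boundary vertices by hand. The paper is in fact looser than you at the right end (it dispatches everything past $\tau_{n-3}$ with ``the same kinds of observations''), so your explicit handling of $t$ and of the possible one-step extension of node $n$'s interval is an improvement in presentation.

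There is, however, a small gap. Your staircase paragraph concludes only that at most three nodes \emph{among $\llbracket 0,n-1\rrbracket$} are active at a given $\tau$; node $n$ is excluded from that count. Your third paragraph then controls node $n$ only on the range $(\tau_{n-1},\tau_n+1]$ surrounding $t$'s spike. But $n$ is already active from $\sigma_{n-1}=\tau_{n-2}+1$ onward, so on the slab $(\tau_{n-2},\tau_{n-1}]$ your two paragraphs together do not yet bound the total: three nodes from $\llbracket 0,n-1\rrbracket$ plus node $n$ could a priori give four. The repair is immediate from what you already wrote: your derivation actually yields the sharper statement that the active nodes in $\llbracket 0,n-1\rrbracket$ lie in $\{j-1,j,j+1\}$, and for $j=n-1$ this set meets $\llbracket 0,n-1\rrbracket$ in only two points, leaving room for $n$. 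Equivalently, just include $i=n$ in the staircase---its interval $[\tau_{n-2}+1,\tau_n]$ fits the same pattern if one sets $\tau_{n+1}:=\tau_n$---and reserve the special analysis for $t$ alone and for the possible extra time step $\tau_n+1$.
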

    First, observe that for $i\in \llbracket 1,n-1\rrbracket$, $\tau^{min}_{int}(i)=\sigma_{i-1}$ and $\tau^{max}_{int}(i)=\tau_{i+1}$. Note that $\tau^{min}_{int}(i)$ and $\tau^{max}_{int}(i)$ are strictly increasing in $i$ and recall that $\sigma_i=\tau_{i-1}+1$ for $i\in\llbracket 1,n\rrbracket$. Now, observe that for $3\leq i \leq n$, $\tau^{max}_{int}(i-2)=\tau_{i-1}<\sigma_i$ and for $1\leq i \leq n-3$, $\tau_i<\sigma_{i+1}=\tau^{min}_{int}(i+2)$. Thus, let $\tau$ such that $\sigma_i\leq\tau\leq\tau_i$ for $3\leq i\leq n-3$. The only nodes that can be active over time $\tau$ are $i-1$, $i$ and $i+1$, so $|F^{int}_{\tau}|\leq 3$. For $\sigma_2\leq\tau\leq\tau_2$, we have that $\tau^{max}_{int}(0)=\tau_1<\sigma_2$ so $|F^{int}_\tau|\leq 3$ as only 1, 2 and 3 can be active over time $\tau$. We also have that $|F^{int}_\tau|\leq 3$ for $\tau<\sigma_2$ (only 0, 1 and 2 can be active over this time). With the same kinds of observations, we can conclude that $|F^{int}_\tau|\leq 3$ for $\tau>\tau_{n-3}$. Finally, we have that the interval vertex-IM-width is exactly 3. Indeed, at time $\tau=\sigma_i$ for $1\leq i\leq n-1$, we have $F^{int}_\tau=\{i-1,i,i+1\}$. 

   \end{proof}

\end{document}